\documentclass[article, pra, showpacs, twocolumn, showkeys, secnumarabic, aps, amsmath, amssymb, nofootinbib, superscriptaddress, nolongbibliography, floatfix, table-of-contents, dblfloatfix]{revtex4-2}
\usepackage[utf8]{inputenc}
\usepackage{soul, amsthm,xcolor}
\usepackage{caption,comment}
\usepackage{subcaption}
\usepackage{float}
\usepackage{graphicx}
\usepackage{mathrsfs,physics}
\usepackage[colorlinks, breaklinks, urlcolor={cyan}, linkcolor={red}, citecolor={blue}]{hyperref}
\usepackage{array}
\usepackage{amsmath,amsthm}
\usepackage{amssymb}
\usepackage{type1cm}
\usepackage[export]{adjustbox}
\usepackage{dsfont}
\usepackage{lettrine}
\usepackage[english]{babel}
\usepackage{lmodern}
\usepackage{microtype}
\usepackage{booktabs}
\usepackage[T1]{fontenc}
\usepackage[boxed, vlined]{algorithm2e}
\usepackage{braket}
\usepackage{xcolor}
\usepackage{orcidlink}
\usepackage{braket}
\usepackage{bm}
\usepackage{bbold}
\usepackage{tikz}

\usepackage{upgreek}
\usepackage{xcolor}

\newtheorem{Theorem}{Theorem}
\newtheorem{Lemma}{Lemma}

\newtheorem{Remark}{Remark}

\newtheorem{Proposition}{Proposition}

\usepackage{bbm}

\begin{document}
\title{Convexity and non-Markovianity of Weyl Maps}
\author{Wen Xu \orcidlink{0000-0002-0072-2949} }
\email[]{xuwen_quantum@ecut.edu.cn}
\affiliation{School of Science, East China University of Technology, Nanchang 330013, China}

\author{Vinayak Jagadish \orcidlink{0000-0002-9637-5194}}
\email[]{vinayakj@am.amrita.edu}
\affiliation{Department of Computer Science and Engineering, Amrita School of Computing, Amrita Vishwa Vidyapeetham, Amritapuri 690525, India}

\begin{abstract}
We investigate the emergence of non-Markovian dynamics in finite-dimensional open quantum systems governed by Weyl dynamical maps and their convex combinations. Using the Hermite normal form, we provide a complete classification of the subgroups of the discrete phase space $\mathbb{Z}_d \times \mathbb{Z}_d$, establishing the algebraic framework underlying the Weyl maps. We characterize isotropic Weyl dynamical maps that generate Markovian semigroups and show that anisotropic Weyl maps with nonuniform weight distributions cannot possess the semigroup property. Furthermore, we analyze the role of convexity in the generation and suppression of memory effects. Remarkably, we prove that convex combinations of eternally non-Markovian Weyl dephasing maps can generate Markovian semigroups, demonstrating that non-Markovianity is not additive under mixing. Conversely, we establish a general condition under which convex mixtures of $N$ distinct Weyl semigroups exhibit eternal non-Markovianity. In contrast to the qubit Pauli setting, we further identify the existence of irreducible eternally non-Markovian Weyl dephasing maps, namely, individual dynamical maps that display eternal memory effects without requiring any mixing mechanism. Finally, explicit qutrit examples illustrate the transition among Markovian, non-Markovian and eternally non-Markovian regimes. Our results uncover a fundamental connection among finite phase-space algebra, convex structures, and quantum memory effects, thereby extending the theory of non-Markovian dynamics beyond the Pauli framework.
\end{abstract}

\maketitle
\section{Introduction}
The dynamics of quantum systems interacting with surrounding environments~\cite{haroche_exploring_2006,Quanta77} lies at the heart of quantum information science and quantum technologies. Such interactions inevitably lead to decoherence and dissipation, which lead to decoherence and dissipation, thereby degrading quantum coherence and limiting the performance of quantum devices. The theory of open quantum systems provides the fundamental framework for modeling these processes and for developing strategies to mitigate their detrimental effects.

Early theoretical treatments of open quantum systems frequently relied on the Markovian approximation, in which environmental correlations decay rapidly and the system dynamics becomes effectively memoryless. Under this assumption, the evolution is generated by quantum dynamical semigroups and admits the well-known Gorini--Kossakowski--Sudarshan--Lindblad (GKSL) structure \cite{lindblad1976, gorini1976}. However, extensive theoretical and experimental studies have demonstrated that many realistic physical systems exhibit significant memory effects, giving rise to non-Markovian dynamics \cite{rivasreview, breuer_colloquium:_2016, de_vega_dynamics_2017, li_concepts_2017, chruscinski2022}. Understanding and characterizing such non-Markovian behavior has become an important problem in quantum information theory and quantum control.

Several complementary approaches have been developed to quantify and interpret quantum non-Markovianity. Prominent among these are the CP-divisibility framework introduced by Rivas, Huelga, and Plenio \cite{rivas_entanglement_2010, hall2010}, and the information-theoretic approach based on distinguishability backflow proposed by Breuer, Laine, and Piilo \cite{breuer_measure_2009}. Although these perspectives emphasize different operational aspects, they capture the central idea that non-Markovian dynamics arises from the presence of memory effects in the system–environment interaction.

An interesting and somewhat counterintuitive mechanism for generating non-Markovian behavior arises through \emph{convex combinations} of quantum dynamical maps. Several studies have shown that mixing Markovian evolutions can produce non-Markovian dynamics according to standard criteria \cite{wolf2008, megier_eternal_2017, jagadish_convex_2020, jagadish_nonqds_2020, wudarski2015, wudarski2016markovian, siudzinskajpa2020}. In particular, geometric analyses of Pauli dynamical maps have revealed that the set of CP-divisible maps is not convex, so that convex mixtures of Markovian channels may lie outside the Markovian region \cite{jagadish_convex_2020, jagadish_nonqds_2020}. Convex mixing has also been shown to generate or conceal noninvertibility in open-system dynamics \cite{siudzinska_markovian_2021, jagadish2022noninvertibility, jagadish_measureinvert_2022, jagadish_nonivertible_2023, siudzinska_jpa_2022, xu2024markovian}. Remarkably, the generation of non-Markovian dynamics through convex mixtures of Pauli semigroups has recently been experimentally  demonstrated on an NMR quantum processor \cite{gulati2024}. These results highlight a subtle interplay between the convex structure, divisibility properties, and memory effects in quantum dynamics.

Most existing studies on convex structures and non-Markovianity have focused on qubit systems described by Pauli dynamical maps. However, many emerging quantum platforms naturally involve higher-dimensional Hilbert spaces, including photonic, atomic, and superconducting architectures employing qudit encodings. For such systems, the natural generalization of Pauli operators is provided by discrete Weyl operators, which form a unitary operator basis on $\mathbb{C}^d$ and are closely related to the Weyl--Heisenberg group \cite{Weyl1927,weyl1950theory,Schwinger1960}. Quantum dynamical maps constructed from Weyl operators provide a natural framework for describing noise and decoherence in higher-dimensional quantum systems.

Despite their importance, the structural properties of Weyl dynamical maps have received considerably less attention than their qubit counterparts. In particular, the convex structure of Weyl semigroups, their divisibility properties, and the mechanisms by which convex mixing can generate or suppress non-Markovian behavior in higher dimensions remain largely unexplored. Addressing these questions is essential for developing a deeper understanding of memory effects in high-dimensional open quantum systems.

In this work, we extend several foundational results known for Pauli dynamical maps to the broader setting of qudit Weyl dynamical maps. By exploiting the algebraic structure of the discrete phase space $\mathbb{Z}_d \times \mathbb{Z}_d$, we analyze the property of semigroup, the divisibility conditions, and the convex mixtures of Weyl dynamical maps. Our results reveal new features that arise specifically in higher dimensions and demonstrate how convex combinations of Weyl semigroups can both generate and suppress non-Markovian behavior depending on the underlying subgroup structure.

The paper is organized as follows. In Sec.~\ref{sec:2}, we introduce Weyl dynamical maps together with the concepts of quantum dynamical semigroups, CP-divisibility, and eternal non-Markovianity. We also analyze the algebraic structure of Weyl dynamical maps through a complete classification of the subgroups of $\mathbb{Z}_d \times \mathbb{Z}_d$. In Sec.~\ref{sec:3}, we derive the necessary and sufficient conditions under which isotropic Weyl dynamical maps form Markovian semigroups. In Sec.~\ref{sec:4}, we show that convex combinations of eternally non-Markovian Weyl dephasing maps can generate Markovian semigroups. Furthermore, we establish a general condition under which convex mixtures of $N$ distinct Weyl semigroups lead to eternal non-Markovianity. In Sec.~\ref{sec:5}, we illustrate these phenomena with explicit examples for the case $d=3$. Finally, in Sec.~\ref{sec:6}, we summarize our main results and discuss possible directions for future work.

\section{Preliminaries}\label{sec:2}
\subsection{Weyl dynamical maps} \label{subsec:II A}
Weyl operators, also known as discrete Weyl-Heisenberg operators \cite{Weyl1927,weyl1950theory,Schwinger1960}, are unitary matrices that generalize the Pauli matrices to higher-dimensional Hilbert spaces and exhibit rich algebraic structures. Given a $d$-dimensional Hilbert space $\mathbb{C}^d$, the Weyl operators $\{U_{kl}\}$ are defined for $k,l \in \mathbb{Z}_d$ as:

\begin{equation*}
    U_{kl} = \sum_{m=0}^{d-1} \omega^{km} \ket{m}\bra{m + l},
    \label{eq:weyl_def}
\end{equation*}
where $\omega = e^{2\pi \mathrm{i} / d}$ is a primitive $d$-th root of unity, and addition in the ket indices is performed modulo $d$. These operators form a basis for the space of linear operators in $\mathbb{C}^d$ and exhibit a rich algebraic structure with the following key properties:
   \begin{eqnarray*}
        U_{kl} U_{rs} &=&\omega^{lr - ks} U_{rs} U_{kl},\nonumber\\
 U_{kl}^\dagger &=& U_{-k, -l},\nonumber\\
 \text{Tr}(U_{kl}^\dagger U_{rs}) &=& d \, \delta_{kr} \delta_{ls},\nonumber\\
 U_{k + d, l} &=& U_{k, l + d} =U_{kl}.
    \end{eqnarray*}
The Weyl dynamical map is defined by 
\begin{equation}\label{double-indices-map}
\mathcal{E}(t)(\rho)=\sum_{(i,j)\in \mathbb{Z}_d\times\mathbb{Z}_d}\, p_{ij}(t)\, U_{ij}\rho U_{ij}^\dagger.
\end{equation}
Note that the indices $(i,j)$ take values in the direct product of two cyclic groups $\mathbb{Z}_d\times\mathbb{Z}_d$ which serves as a discrete phase space label for the Weyl operators. While this set carries a natural abelian group structure under addition modulo $d$, the Weyl operators provide only a projective unitary representation of $\mathbb{Z}_d \times \mathbb{Z}_d$.

The Weyl operators are eigenoperators of the dynamical map:
\begin{equation*}\label{eig-eq}
\mathcal{E}(t)(U_{kl})=\lambda_{kl}(t)U_{kl},
\end{equation*}
where the eigenvalues $\lambda_{kl}(t)$ are given by
\begin{equation}\label{eig-double-index}
\lambda_{kl}(t)=\sum\limits_{(i,j)\in \mathbb{Z}_d\times\mathbb{Z}_d}\omega^{jk-il}\; p_{ij}(t).
\end{equation}

  \subsection{Quantum dynamical semigroups, CP-divisibility and eternal non-Markovianity}
\label{subsec:CPdiv}

Before analyzing the algebraic structure of Weyl maps, we briefly recall the criteria used to characterize memory effects in the dynamics of open quantum systems. We consider a family of dynamical maps $\{\mathcal{E}(t)\}_{t\geq0}$ that describe the reduced evolution of the system,
\begin{equation*}
    \rho(t) = \mathcal{E}(t)[\rho(0)],
    \label{eq:dynamical_map}
\end{equation*}
where each map $\mathcal{E}(t)$ is completely positive and trace preserving (CPTP), and $\mathcal{E}(0)=\mathbb{1}_d$ is the identity operator.

The evolution is governed by the time-local master equation
\begin{equation*}
    \frac{\mathrm{d}}{\mathrm{d}t}\rho(t) = \mathcal{L}(t)\rho(t),
    \label{eq:time_local_ME}
\end{equation*}
where the generator $\mathcal{L}(t)$ admits the GKLS form
\begin{align}
    \mathcal{L}(t)(\rho)
    =-\mathrm{i}[H(t),\rho]+ 
    \sum_{\alpha}\gamma_{\alpha}(t)
    \left(
    L_{\alpha}\rho L_{\alpha}^{\dagger}
    - \frac{1}{2}\{L_{\alpha}^{\dagger}L_{\alpha},\rho\}
    \right).
    \label{eq:GKLS_time_dep}
\end{align}
Here, $H(t)$ is an effective Hamiltonian (possibly time-dependent), $\{L_{\alpha}\}$ are Lindblad operators, and the real-valued functions $\gamma_{\alpha}(t)$ are time-dependent decay rates.

For a Weyl dynamical map, the generator $\mathcal{L}(t)$ takes a particularly simple form. Identifying the Lindblad operators $\{L_\alpha\}$ with the unitary Weyl operators $\{U_{ij}\}$ and assuming a vanishing effective Hamiltonian $H(t)=0$, one can show that Eq. (\ref{eq:GKLS_time_dep}) reduces to \cite{xu2024markovian} 
\begin{equation}\label{generator-expression}
\mathcal{L}(t)(\rho)=\sum\limits_{(i,j)\in\mathbb{Z}_d\times\mathbb{Z}_d\setminus\{(0,0)\}}\gamma_{ij}(t)(U_{ij}\rho U_{ij}^{\dagger}-\rho),
\end{equation}
where the decay rates $\gamma_{ij}(t)$ are
\begin{equation}\label{decay-rate-expression}
\gamma_{ij}(t)=\frac{1}{d^{2}}\sum\limits_{(k,l)\in\mathbb{Z}_{d}\times\mathbb{Z}_{d}}\omega^{il-jk}~\mu_{kl}(t).
\end{equation}
Here, the eigenvalues of the generator are $\mu_{kl}(t)=\frac{\dot{\lambda}_{kl}(t)}{\lambda_{kl}(t)}$ due to $\mathcal{L}(t)=\dot{\mathcal{E}}(t) \circ \mathcal{E}^{-1}(t)$ if the map $\mathcal{E}(t)$ is invertible.

Throughout this work, we characterize memory effects according to the Rivas--Huelga--Plenio (RHP) criterion~\cite{rivas_entanglement_2010}, which associates Markovianity with the complete positivity of intermediate dynamical maps, or equivalently, with the CP-divisibility of the evolution.

\begin{enumerate}
    \item \textbf{CP-divisibility (Markovianity).}  
    A dynamical map $\{\mathcal{E}(t)\}_{t\geq0}$ is said to be \emph{CP-divisible} if, for any $t \ge s \ge 0$, there exists a CPTP map $\mathcal{E}(t,s)$ such that
    \begin{equation*}
        \mathcal{E}(t) = \mathcal{E}(t,s)\mathcal{E}(s).
        \label{eq:CP_divisibility}
    \end{equation*}
    For an invertible time-local master equation of Eq. \eqref{eq:GKLS_time_dep}, CP-divisibility is equivalent to the condition
    \begin{equation*}
        \gamma_{\alpha}(t) \geq 0
        \qquad
        \forall\, \alpha,\;\forall\, t \ge 0.
        \label{eq:positive_rates}
    \end{equation*}
    Dynamical evolution satisfying this condition is termed \emph{Markovian} according to the RHP criterion.

    \item \textbf{Markovian semigroup.}  
    If the decay rates are time-independent nonnegative constants, i.e.,
    \begin{equation*}
        \gamma_{\alpha}(t) \equiv \gamma_{\alpha} \ge 0,
        \label{eq:constant_rates}
    \end{equation*}
    then the generator $\mathcal{L}(t)$ becomes time-independent and the corresponding dynamical map forms a one-parameter quantum dynamical semigroup,
    \begin{equation*}
        \mathcal{E}(t+s) = \mathcal{E}(t)\mathcal{E}(s),
        \qquad t\ge s \ge 0.
        \label{eq:semigroup_property}
    \end{equation*}
    This represents the strongest notion of Markovianity and corresponds to strictly memoryless evolution.

    \item \textbf{Non-Markovianity.}  
    The dynamics is classified as \emph{non-Markovian} whenever CP-divisibility is violated. For time-local master equation, this occurs if there exists at least one decay channel $\alpha$ and a time $t>0$ such that
    \begin{equation*}
        \gamma_{\alpha}(t) < 0.
        \label{eq:negative_rate}
    \end{equation*}
    Negative decay rates indicate that the failure of the generator to be of GKLS form at that time and imply that the intermediate map $\mathcal{E}(t,s)$ is no longer completely positive.

    \item \textbf{Eternal non-Markovianity (ENM).}  
Eternal non-Markovianity denotes a particularly strong violation of CP-divisibility. A dynamics is said to be eternally non-Markovian if, for at least one decay channel $\alpha$ such that
\begin{equation*}
    \gamma_{\alpha}(0)=0,
    \qquad
    \gamma_{\alpha}(t) < 0 \;\; \forall\, t>0^{+}.
    \label{eq:ENM_condition}
\end{equation*}
In this regime, the CP-divisibility is violated immediately after the initial time, and the evolution is non-Markovian for all $t>0$, even though each map $\mathcal{E}(t)$ remains CPTP for all $t\ge0$.

\end{enumerate}

\subsection{The algebraic structure of Weyl dynamical maps}
\subsubsection{From weights to subgroups}
The dynamical properties of a Weyl map $\mathcal{E}(t)$ are completely determined by the time-dependent coefficients (or weights) $p_{ij}(t)$ appearing in \eqref{double-indices-map}. In particular, the algebraic structure of the dynamics is closely related to the support of the probability distribution 
$
\mathrm{supp}(p)
=
\{(i,j)\in\mathbb{Z}_d\times\mathbb{Z}_d : p_{ij}(t)\neq0\}.$
In this work, we restrict our attention to the indices belonging to the support for which the participating Weyl operators constitute a subgroup $G\subseteq \mathbb{Z}_d\times\mathbb{Z}_d.$ A systematic analysis of such dynamics requires a classification of the subgroups of the discrete phase space. Since a subgroup may admit many different generating sets, it is convenient to adopt a canonical description based on lattice theory. Viewing subgroups of $\mathbb{Z}_d\times\mathbb{Z}_d$ as integer lattices modulo $d$, we employ the Hermite normal form (HNF)~\cite{newman1972}, which provides a unique matrix representation for each subgroup. This approach provides a complete and non-redundant classification of all subgroups of $\mathbb{Z}_d\times\mathbb{Z}_d$.

\subsubsection{Classification via Hermite normal form}
\begin{Lemma}[HNF classification]\label{lem:HNF}
Every subgroup $G\subseteq \mathbb{Z}_d\times\mathbb{Z}_d$ admits a unique representation of the Hermite normal form, which is generated by the rows of an upper triangular matrix
\begin{equation}\label{eq:HNF}
M=\begin{pmatrix}
m & w \\
0 & n
\end{pmatrix}.
\end{equation}
Here, the parameters satisfy
\begin{enumerate}
    \item [\rm (1)] $m \mid d$ and $n \mid d$,
    \item [\rm (2)] $0 \leq w < n$,
    \item [\rm (3)] $n \mid \frac{wd}{m}$.
\end{enumerate}
\end{Lemma}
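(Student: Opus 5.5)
We lift the problem to integer lattices. Let $\pi:\mathbb{Z}^2\to\mathbb{Z}_d\times\mathbb{Z}_d$ be reduction modulo $d$. Subgroups $G\subseteq\mathbb{Z}_d\times\mathbb{Z}_d$ are in bijection with lattices $L=\pi^{-1}(G)$ satisfying $d\mathbb{Z}^2\subseteq L\subseteq\mathbb{Z}^2$. Every full-rank sublattice of $\mathbb{Z}^2$ has a unique basis in row Hermite normal form $\begin{pmatrix} m & w\\ 0 & n\end{pmatrix}$ with $m,n>0$ and $0\le w<n$ (Newman). The task is therefore to translate the condition $d\mathbb{Z}^2\subseteq L$ into conditions (1)--(3). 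We will use the row-space convention, so $L=\{a(m,w)+b(0,n):a,b\in\mathbb{Z}\}$.

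\textbf{Step 1: existence of the form.} We construct $m,n,w$ directly, which also explains the conditions.
\begin{itemize}
\item Projecting $L$ to the first coordinate gives a subgroup $m\mathbb{Z}\supseteq d\mathbb{Z}$, so $m\mid d$.
\item The kernel of this projection is $\{0\}\times n\mathbb{Z}\supseteq\{0\}\times d\mathbb{Z}$, so $n\mid d$. This gives (1).
\item Choose $(m,w')\in L$ and reduce $w'$ modulo $n$ to obtain the unique $w$ with $0\le w<n$. This gives (2).
\item The vectors $(m,w)$ and $(0,n)$ generate $L$ by the standard exact-sequence argument: any $(x,y)\in L$ has $m\mid x$, and subtracting $(x/m)(m,w)$ leaves an element of the kernel.
\end{itemize}

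\textbf{Step 2: condition (3).} We need $(d,0)\in L$. The only way to write it is $(d,0)=\tfrac{d}{m}(m,w)+b(0,n)$, which requires $n\mid \tfrac{wd}{m}$. The other generator $(0,d)$ lies in $L$ automatically because $n\mid d$. So condition (3) is exactly the requirement $d\mathbb{Z}^2\subseteq L$, and we get the converse for free: any triple satisfying (1)--(3) defines a lattice containing $d\mathbb{Z}^2$, hence a subgroup $G=\pi(L)$.

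\textbf{Step 3: uniqueness.} Two admissible matrices that generate the same $G$ generate the same $L$, since $L=\pi^{-1}(G)$ is determined by $G$. For a fixed $L$:
\begin{itemize}
\item $m$ is determined as the positive generator of the first-coordinate projection;
\item $n$ is determined as the positive generator of the kernel;
\item $w$ is determined modulo $n$, since any two choices of $w'$ differ by an element of the kernel $n\mathbb{Z}$. The normalization $0\le w<n$ then fixes it.
\end{itemize}

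\textbf{Main obstacle.} The subtle point is conventions rather than mathematics. One has to be careful that the generated object means the subgroup $\pi(\text{row lattice})$ in $\mathbb{Z}_d^2$, and that $m=d$ or $n=d$ (entries equal to $0$ mod $d$) are allowed, for example in the trivial subgroup, which corresponds to $m=n=d$ and $w=0$. We also need to check that the first row's second entry, taken modulo $n$ rather than modulo $d$, is the correct normalization. Working throughout in the lifted lattice $L$ rather than in $\mathbb{Z}_d^2$ avoids these ambiguities.
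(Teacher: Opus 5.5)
Your proof is correct and follows essentially the same route as the paper: you lift $G$ to $L=\pi^{-1}(G)$ with $d\mathbb{Z}^2\subseteq L$, take the Hermite normal form of $L$, and read off conditions (1)--(3) from $(d,0),(0,d)\in L$. You also build the basis explicitly from the first-coordinate projection and its kernel, and you check the converse (every triple satisfying (1)--(3) gives a lattice containing $d\mathbb{Z}^2$, hence a subgroup); together these make explicit the bijection that the paper's proof leaves implicit but later relies on in the counting lemma.
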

The proof of Lemma \ref{lem:HNF} is given in Appendix \ref{sec:AppendixA}.

The subgroup generated by \eqref{eq:HNF} is
\begin{align}\label{eq:G-general}
G&=\langle(m,w),(0,n)\rangle \nonumber\\
&=\{(mu,wu+nv)\mid u\in\mathbb{Z}_{d/m},\,v\in\mathbb{Z}_{d/n}\}
\end{align}
with order
\begin{equation*}\label{eq:order-G}
|G|=\frac{d^2}{mn}.
\end{equation*}

The structural classification of the subgroup $G$ depends on whether the second generator $(0,n)$ is redundant or essential. We define the redundancy threshold 
$$\nu= \gcd(\frac{wd}{m}, d)$$  
as the greatest common divisor of $\frac{wd}{m}$ and $d.$
\begin{enumerate}
\item [\rm (1)] \textbf{Type 1 (cyclic subgroups)}\\ 
If 
\begin{equation}\label{eq:Type1-condition}
n=\nu,
\end{equation}
then
\begin{equation}\label{eq:Type1}
G=\langle(m,w)\rangle,
\qquad
|G|=\frac{d}{\gcd(m,w)}.
\end{equation}
In this case, the second generator $(0,n)$ is redundant and is generated by repeated addition of $(m,w)$.
\item [\rm (2)] \textbf{Type 2 (split rank-$2$ subgroups)} \\
If
\begin{equation}\label{eq:Type2-condition}
w=0,\qquad n<d,
\end{equation}
then
\begin{equation}\label{eq:Type2}
G=m\mathbb{Z}_d \times n\mathbb{Z}_d
=\{(mu,nv)\mid u\in\mathbb{Z}_{d/m},\,v\in\mathbb{Z}_{d/n}\},
\end{equation}
which is a direct product of two cyclic groups.

\item [\rm (3)] \textbf{Type 3 (non-split rank-$2$ subgroups)} \\
If
\begin{equation}\label{eq:Type3-condition}
w\neq 0
\quad\text{and}\quad
n<\nu,
\end{equation}
then $G$ has rank $2$ but the subgroup is not expressed in the split form $
m\mathbb{Z}_d\times n\mathbb{Z}_d$. Such subgroup is given by \eqref{eq:G-general}.
\end{enumerate}

\begin{Remark}
Lemma \ref{lem:HNF} provides a complete classification of the subgroups of $\mathbb{Z}_d\times\mathbb{Z}_d$, thereby extending the analysis presented in Ref. \cite{xu2024markovian}. As an example, consider a subgroup of type 3 in dimension $d=4$. Setting the diagonal parameters to $m=1$ and $n=2$ which requires the off-diagonal parameter to be $w=1$. This yields a subgroup of order $|G|=8$, which is generated by
\[
    G=\langle(1,1),(0,2)\rangle.
\]

\end{Remark}

\subsubsection{Symplectic structure and dual subgroups}
The algebraic properties of the Weyl operators and the dynamics are deeply rooted in the symplectic structure of the discrete phase space $\mathbb{Z}_d \times \mathbb{Z}_d$. To simplify the notation without introducing ambiguity, we denote each element $(i,j) \in \mathbb{Z}_d \times \mathbb{Z}_d$ in the discrete phase space by a single index $u$. Unless otherwise specified, this single index convention will be adopted throughout the remainder of this paper. For any two such indices $u = (i, j)$ and $v = (k, l)$, the \textit{symplectic inner product} is defined as
\begin{equation}\label{symplectic_inner_product}
    u \wedge v := jk - il \pmod d.
\end{equation}
This product naturally characterizes the commutation relations of the Weyl basis, which can be expressed in a compact form as $U_u U_v = \omega^{u \wedge v} U_v U_u$, consistent with the properties described in subsection \ref{subsec:II A}.

Given a subgroup $G \subseteq \mathbb{Z}_d \times \mathbb{Z}_d$, we define its \textit{symplectic dual subgroup}, denoted by $G^\perp$, as the set of all elements in the discrete phase space that commute with every element in $G$:
\begin{equation}\label{symplectic_dual_subgroup}
    G^\perp = \{ v \in \mathbb{Z}_d \times \mathbb{Z}_d \mid u \wedge v \equiv 0 \pmod d, \, \forall u \in G \}.
\end{equation}
This dual subgroup $G^\perp$ plays a pivotal role in the spectral analysis of Weyl dynamical maps. The action of the map on a Weyl basis operator $U_v$ yields \begin{equation}\mathcal{E}(t)(U_v) = \lambda_v(t) U_v, v\in\mathbb{Z}_d \times \mathbb{Z}_d.\end{equation} Using the symplectic inner product notation \eqref{symplectic_inner_product}, the eigenvalues $\lambda_v(t)$ of \eqref{eig-double-index} can be expressed as
\begin{align}\label{eq:eigenvalue_general}
\lambda_{v}(t) &=\sum_{u \in G} \omega^{u \wedge v}p_u(t)\nonumber\\
&=1-\sum_{u \in G\setminus\{0\}} p_u(t) + \sum_{u \in G\setminus\{0\}} \omega^{u \wedge v}p_u(t). 
\end{align}
Crucially, the evaluation of $\omega^{u \wedge v}$ depends on the relation between the index $v$ of  eigenoperator and the dual subgroup $G^\perp$ (i.e., whether $v \in G^\perp$).

Furthermore, the corresponding time-local generator acts as $\mathcal{L}(t)(U_v) = \mu_v(t) U_v$, where $\mu_v(t) = \dot{\lambda}_v(t) / \lambda_v(t)$ represent the eigenvalues of the generator. 
According to \eqref{decay-rate-expression}, the decay rates $\gamma_{\alpha}(t)$ of the map $\mathcal{E}(t)$ are given by
\begin{equation}\label{eq:gamma_def_appendix}
\gamma_{\alpha}(t)=\frac{1}{d^2} \sum_{v\in\mathbb{Z}_d\times\mathbb{Z}_d}\omega^{-\alpha \wedge v } \mu_v(t),
\end{equation}
where $\alpha \in \mathbb{Z}_d \times \mathbb{Z}_d\setminus{\{(0,0)\}}$ serves as the compact single index shorthand for the double indices. Such a symplectic structure provides a fundamental tool for our investigations of eternal non-Markovianity.

\color{black}

\subsubsection{Counting subgroups of order \texorpdfstring{$K\ge2$}{K>2} }

Let $d=\prod_{i=1}^s p_i^{e_i}$ be the prime factorization of the dimension of the system. By Lemma \ref{lem:HNF}, any subgroup $G \subseteq \mathbb{Z}_d \times \mathbb{Z}_d$ of order $K$ (where $K \mid d^2$) corresponds to a matrix with Hermite normal form such that 
\begin{equation}\label{K_order}
|G| = \frac{d^2}{mn}=K. 
\end{equation}
This implies $mn = \frac{d^2}{K}$. Let us define the set of diagonal parameters for a given order $K$ as
\begin{equation*}
\mathcal{S}_K = \left\{ (m,n) \in \mathbb{N}^2 \;\bigg|\; m \mid d, \ n \mid d, \ mn = \frac{d^2}{K} \right\}.
\end{equation*}

In the following, we employ the methods of number theory to derive the total count of subgroups for any order $K\ge2$, and we determine the specific order at which the total number of subgroups attains its maximum.

\begin{Lemma}[Counting subgroups of arbitrary order]\label{lem:counting_general}
Let $\mathcal{N}(K)$ denote the total number of subgroups of $\mathbb{Z}_d \times \mathbb{Z}_d$ of order $K$.

\begin{enumerate}
\item [\rm 1.] \textbf{Total count.} The total number of subgroups of order $K$ is given by the sum over the parameters set $\mathcal{S}_K$:
\begin{equation}\label{eq:total_K}
\mathcal{N}(K) = \sum_{(m,n) \in \mathcal{S}_K} \gcd\left(n, \frac{d}{m}\right),
\end{equation}
where $\gcd\left(n, \frac{d}{m}\right)$ is the greatest common divisor of $n$ and $\frac{d}{m}.$
\item [\rm 2.] \textbf{Maximality.} The function $\mathcal{N}(K)$ attains its maximum when
\begin{equation*}
K=d.
\end{equation*}
In this specific maximal case, the set $\mathcal{S}_d$ and the total count $\mathcal{N}(d)$ reduces to
\begin{align*}
\mathcal{S}_d&=\{(m, n) \in \mathbb{N}^2  \mid mn=d\}, \\
\mathcal{N}(d)& =\sigma_1(d) = \sum_{m \mid d} \frac{d}{m} = \prod_{i=1}^s\frac{p_i^{e_i+1}-1}{p_i-1},
\end{align*}
respectively, where $\sigma_1(d)$ is the divisor sum function.
\end{enumerate}
\end{Lemma}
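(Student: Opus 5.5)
The plan is to prove the two parts separately: part 1 by counting the Hermite normal forms of Lemma \ref{lem:HNF}, and part 2 by splitting into prime powers and doing an explicit computation for each prime.

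\textbf{Total count.} By Lemma \ref{lem:HNF}, subgroups of $\mathbb{Z}_d\times\mathbb{Z}_d$ are in bijection with matrices $M$ of the form \eqref{eq:HNF} satisfying conditions (1)--(3). By \eqref{K_order}, the subgroup has order $K$ exactly when $(m,n)\in\mathcal{S}_K$. So it suffices to count, for fixed $(m,n)\in\mathcal{S}_K$, the integers $0\le w<n$ with $n\mid w\,\frac{d}{m}$.

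Put $g=\gcd(n,d/m)$. Then $n\mid w\frac{d}{m}$ is equivalent to $\frac{n}{g}\mid w\cdot\frac{d/m}{g}$. Since $\frac{n}{g}$ and $\frac{d/m}{g}$ are coprime, this is equivalent to $\frac{n}{g}\mid w$. The multiples of $n/g$ in $[0,n)$ are exactly $g$ in number. Summing over $\mathcal{S}_K$ gives \eqref{eq:total_K}.

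\textbf{Maximality: reduction to prime powers.} I will show that $\mathcal{N}$ is multiplicative over the prime factorization. Write $d=\prod_i p_i^{e_i}$ and $K=\prod_i p_i^{k_i}$ with $0\le k_i\le 2e_i$. Then:
\begin{itemize}
\item a pair $(m,n)\in\mathcal{S}_K$ is the same as a choice of $m=\prod p_i^{a_i}$, $n=\prod p_i^{b_i}$ with $a_i,b_i\le e_i$ and $a_i+b_i=2e_i-k_i$ for each $i$;
\item the summand factors as $\gcd(n,d/m)=\prod_i p_i^{\min(b_i,\,e_i-a_i)}$.
\end{itemize}
Hence the sum factorizes as $\mathcal{N}(K)=\prod_i N_{p_i,e_i}(k_i)$, where
\[
N_{p,e}(k)=\sum_{\substack{a+b=2e-k\\ 0\le a,b\le e}} p^{\min(b,\,e-a)} .
\]
Equivalently, this is the Chinese remainder decomposition $\mathbb{Z}_d\times\mathbb{Z}_d\cong\prod_i(\mathbb{Z}_{p_i^{e_i}})^2$, which I can cite as a cross-check.

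\textbf{Maximality: the local count.} I will evaluate $N_{p,e}(k)$ in two cases.
\begin{itemize}
\item If $k\le e$, then $a+b\ge e$, so $b\ge e-a$ and the minimum is $e-a$. Here $a$ runs from $e-k$ to $e$, so $j=e-a$ runs over $0,\dots,k$, giving $N_{p,e}(k)=\sigma_1(p^k)$.
\item If $k\ge e$, then $a+b\le e$, so the minimum is $b$. Here $b$ runs over $0,\dots,2e-k$, giving $N_{p,e}(k)=\sigma_1(p^{2e-k})$.
\end{itemize}
Thus $N_{p,e}(k)=\sigma_1\bigl(p^{\min(k,2e-k)}\bigr)$. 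Since $\sigma_1(p^j)$ is strictly increasing in $j$, this is uniquely maximized at $k=e$.

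\textbf{Maximality: conclusion.} The product $\prod_i N_{p_i,e_i}(k_i)$ is therefore maximized exactly when $k_i=e_i$ for every $i$, that is, when $K=d$.

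For $K=d$ the constraint $mn=d$ already forces $m,n\mid d$, so $\mathcal{S}_d$ is as stated. Moreover $d/m=n$, so each summand is $\gcd(n,n)=n$. This gives $\mathcal{N}(d)=\sum_{mn=d}n=\sigma_1(d)$. The product formula $\prod_i\frac{p_i^{e_i+1}-1}{p_i-1}$ then follows from multiplicativity of $\sigma_1$ and the geometric series.

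\textbf{Main obstacle.} The delicate step is the bookkeeping in the local count: getting the ranges of $a$ and $b$ (under $a,b\le e$) right, and tracking which term attains the minimum in each regime. Once that case split is done, the multiplicative factorization makes the global maximization immediate.
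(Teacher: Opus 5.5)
Your proposal is correct. Part 1 is the same argument as the paper's: fix $(m,n)\in\mathcal{S}_K$, solve $w\,\frac{d}{m}\equiv 0 \pmod n$, and obtain $\gcd(n,d/m)$ solutions.

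Your maximality argument takes a genuinely different route. The paper gets multiplicativity structurally, from the CRT/Sylow decomposition of a subgroup into its $p$-components. For each factor $\mathbb{Z}_{p^e}\times\mathbb{Z}_{p^e}$ it then cites two facts without proof:
\begin{enumerate}
\item a duality $\mathcal{N}(r)=\mathcal{N}(2e-r)$, phrased via $G\mapsto Q/G$ (this really has to be read as the annihilator of $G$ in the dual group);
\item the known rank-unimodality of the subgroup lattice of an abelian $p$-group of type $(e,e)$.
\end{enumerate}
The paper computes the value at $K=d$ only afterwards, as a separate step.

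You instead derive multiplicativity directly from the part-1 formula, via the bijection of $\mathcal{S}_K$ with local exponent pairs $(a_i,b_i)$ and the factorization of $\gcd(n,d/m)$. You then evaluate each local sum in closed form, $N_{p,e}(k)=\sigma_1\bigl(p^{\min(k,2e-k)}\bigr)$. Your case split and index ranges are right: for $d=4$ you get $3,7,3$ for $K=2,4,8$, which matches a direct count.

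What your route buys is a fully self-contained proof with no cited lattice-theoretic results. Symmetry, strict unimodality and uniqueness of the maximizer $K=d$ all follow as corollaries. You also obtain the explicit formula $\mathcal{N}(K)=\prod_i\sigma_1\bigl(p_i^{\min(k_i,2e_i-k_i)}\bigr)$ for every $K\mid d^2$, which the paper never states.

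The price is that your maximality argument rests entirely on the HNF parametrization of Lemma \ref{lem:HNF} and the part-1 count. The paper's structural argument, once its cited facts are granted, is independent of that bookkeeping.
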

The proof of Lemma \ref{lem:counting_general} is given in Appendix \ref{sec:AppendixB}.

\section{Semigroup property of isotropic Weyl dynamical maps}\label{sec:3}

Before analyzing the properties of convex combinations of Weyl dynamical maps, we first discuss the conditions for an individual Weyl dynamical map to be a semigroup.

\subsection{Absence of semigroup property in anisotropic Weyl maps}
For the three types of subgroups $G$ described above, we discuss what kind of Weyl dynamical map (\ref{double-indices-map}) cannot be a semigroup. 

\begin{Proposition}\label{prop:no_semigroup_anisotropic}
Consider the anisotropic Weyl dynamical map
\begin{equation}\label{anisotropic proportional map}
\mathcal{E}_{\mathrm{aniso}}(t)(\rho)
=
(1-p(t))\rho
+
p(t)\sum_{u\in G\setminus\{0\}}w_u U_u\rho U_u^\dagger,
\end{equation}
where the coefficients $w_u$ satisfy $\sum_{u \in G \setminus \{0\}} w_{u} = 1$ and $w_{u} \ge 0$, which are time-independent and non-uniform. If the map possesses at least two distinct nontrivial eigenvalues, then the family $\{\mathcal{E}_{\mathrm{aniso}}(t)\}_{t\ge0}$ cannot form a quantum dynamical semigroup generated by a single scalar probability function $p(t)$.
\end{Proposition}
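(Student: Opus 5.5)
Our plan is to work entirely at the level of the eigenvalues of the map. By \eqref{eq:eigenvalue_general}, every Weyl operator $U_v$ is an eigenoperator of $\mathcal{E}_{\mathrm{aniso}}(t)$, with eigenvalue
\begin{equation*}
\lambda_v(t)=1-p(t)\,c_v,\qquad c_v:=\sum_{u\in G\setminus\{0\}}w_u\bigl(1-\omega^{u\wedge v}\bigr).
\end{equation*}
The constants $c_v$ do not depend on time. The hypothesis of at least two distinct nontrivial eigenvalues says that there exist $v_1,v_2$ with $c_{v_1}\neq c_{v_2}$, both different from $0$. We will argue by contradiction: we assume the family is a quantum dynamical semigroup, $\mathcal{E}(t)=e^{t\mathcal{L}}$ with a time-independent $\mathcal{L}$.

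\textbf{Step 1: translate the semigroup law.} Since all $\mathcal{E}(t)$ are diagonal in the common Weyl basis, the relation $\mathcal{E}(t+s)=\mathcal{E}(t)\mathcal{E}(s)$ holds iff $\lambda_v(t+s)=\lambda_v(t)\lambda_v(s)$ for every $v$. For continuous $\lambda_v$ with $\lambda_v(0)=1$, this forces $\lambda_v(t)=e^{\mu_v t}$ with constant $\mu_v$. Equivalently, the generator eigenvalues $\mu_v(t)=\dot\lambda_v/\lambda_v$ must be time-independent.

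\textbf{Step 2: compare two eigenvalue directions.} Applied to $v_1$ and $v_2$, Step 1 gives
\begin{equation*}
1-p(t)c_{v_1}=e^{\mu_1 t},\qquad 1-p(t)c_{v_2}=e^{\mu_2 t}.
\end{equation*}
Eliminating $p(t)$ yields the identity
\begin{equation*}
c_{v_2}\bigl(1-e^{\mu_1 t}\bigr)=c_{v_1}\bigl(1-e^{\mu_2 t}\bigr)\qquad\text{for all } t\ge 0.
\end{equation*}
By linear independence of distinct exponentials, this can only hold in two ways. Either $\mu_1=\mu_2$ and $c_{v_1}=c_{v_2}$, which contradicts the hypothesis. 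Or $\mu_1=\mu_2=0$, which gives $p\equiv 0$ and hence trivial dynamics, excluded since $p$ must be a nontrivial function. (Alternatively, differentiating at $t=0$ gives $\mu_i=-\dot p(0)c_{v_i}$; substituting back gives a functional equation in $p$ whose only solution is constant.)

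\textbf{Step 3 (main obstacle): where non-uniformity enters.} The delicate point is the role of the non-uniform weights. For uniform weights on $G\setminus\{0\}$, character orthogonality on $G$ makes $c_v$ take only the values $0$ (for $v\in G^\perp$) and $|G|/(|G|-1)$ (for $v\notin G^\perp$). So there is only one nontrivial eigenvalue, and $p(t)=\frac{|G|-1}{|G|}(1-e^{-\gamma t})$ does give a semigroup. We would remark that the non-uniformity of $w_u$ is exactly what allows two distinct nonzero $c_v$, via Fourier inversion on $G/(G\cap G^\perp)$. The proof itself then rests on the stated hypothesis.

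We would also handle possible noninvertibility. If some $\lambda_v(t_0)=0$, the semigroup form $e^{\mu t}$ is impossible, since exponentials never vanish. Hence a semigroup forces invertibility, and the argument above applies.
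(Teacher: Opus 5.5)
Your proposal is correct and follows essentially the paper's route. Both assume the semigroup property, obtain $1-c_v\,p(t)=e^{\mu_v t}$ for two indices with distinct nonzero $c_v$ (the paper's $\beta_v$), equate the two resulting expressions for $p(t)$, and force $c_{v_1}=c_{v_2}$. The paper gets there by matching Taylor coefficients at $t=0$ and you by linear independence of exponentials; your version also explicitly disposes of the degenerate case $p\equiv 0$, which the paper's division by $\Gamma_a/\beta_a$ silently excludes.

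Your Step~3 aside is unused and somewhat overstated. Weights uniform on $H\setminus\{0\}$ for a proper subgroup $H\subset G$ are non-uniform on $G\setminus\{0\}$, yet they give only one nonzero $c_v$. So you are right to let the proof rest on the stated hypothesis alone.
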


The proof of Proposition \ref{prop:no_semigroup_anisotropic} is given in Appendix \ref{sec:Appendix:ani-pro_map}.

\subsection{Necessary and sufficient conditions for isotropic Weyl maps to form Markovian semigroups} 

By Proposition \ref{prop:no_semigroup_anisotropic}, if the weight distribution $w_u=\frac1{|G|-1}$ is uniform for all $u\in G\setminus\{0\}$, then \eqref{anisotropic proportional map} reduces to the isotropic case:
\begin{equation}\label{three-types-Weyl-maps}
\mathcal{E}_{\textnormal{iso}}(t)(\rho)=(1-p(t))\rho+\frac{p(t)}{|G|-1}\sum_{u\in G\setminus\{0\}}U_{u}\rho U^{\dagger}_{u},
\end{equation}
where $|G|\geq2.$ The isotropic Weyl dynamical maps (\ref{three-types-Weyl-maps}) correspond to the following three classes of subgroups:
\begin{enumerate}
\item [\rm (1)] The isotropic Weyl dynamical maps corresponding to the cyclic subgroups (\ref{eq:Type1}) with the condition (\ref{eq:Type1-condition}) for any $d\geq2.$

\item [\rm (2)] The isotropic Weyl dynamical maps corresponding to the subgroups (\ref{eq:Type2}) with the condition (\ref{eq:Type2-condition}) for any $d\geq2.$

\item [\rm (3)] The isotropic Weyl dynamical maps corresponding to the subgroups (\ref{eq:G-general}) with the condition (\ref{eq:Type3-condition}) for any $d\geq4.$
\end{enumerate}

\begin{Remark}
In particular, if $|G|=2$ and $u\in G\setminus \{0\}$, then \textnormal{(\ref{three-types-Weyl-maps})} reduces to the Weyl dephasing map 
\begin{equation}\label{double_Weyl_dephasing_maps}
\mathcal{E}_{u}^p(t)(\rho)=(1-p(t))\rho+p(t)U_{u}\rho U_{u}^\dagger.
\end{equation}
\end{Remark}

In what follows, we establish the necessary and sufficient conditions for the isotropic Weyl maps (\ref{three-types-Weyl-maps}) to form Markovian semigroups.

\begin{Proposition}\label{Prop1-weyl-semigroup-condition}
The isotropic Weyl dynamical maps \textnormal{(\ref{three-types-Weyl-maps})} are Markovian semigroups if and only if the probability distribution function takes the form $p(t)=\frac{|G|-1}{|G|}(1-e^{-ct})$ for some constant $c\in\mathbb{R}^+$.
\end{Proposition}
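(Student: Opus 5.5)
The plan is to diagonalize the isotropic map \eqref{three-types-Weyl-maps} in the Weyl basis via \eqref{eq:eigenvalue_general}, and then read off both the semigroup property and the positivity of the rates from the resulting eigenvalues.

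\textbf{Step 1: spectrum.} For $v\in G^\perp$ every phase $\omega^{u\wedge v}$ equals $1$, so $\lambda_v(t)=1$. For $v\notin G^\perp$, the map $u\mapsto u\wedge v$ is a nontrivial character of $G$. Hence $\sum_{u\in G}\omega^{u\wedge v}=0$, which gives $\sum_{u\in G\setminus\{0\}}\omega^{u\wedge v}=-1$. Substituting the uniform weights $p_u=p(t)/(|G|-1)$ into \eqref{eq:eigenvalue_general} yields
\begin{equation*}
\lambda_v(t)=1-p(t)-\frac{p(t)}{|G|-1}=1-\frac{|G|}{|G|-1}\,p(t)=:\lambda(t),\qquad v\notin G^\perp .
\end{equation*}
Since $G$ is nontrivial, the character pairing is nondegenerate, so $G^\perp$ is a proper subgroup and this nontrivial eigenvalue really occurs. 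Thus the map has exactly one nontrivial eigenvalue $\lambda(t)$, in contrast to the anisotropic case of Proposition \ref{prop:no_semigroup_anisotropic}.

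\textbf{Step 2: sufficiency.} Suppose $p(t)=\frac{|G|-1}{|G|}(1-e^{-ct})$. Then $\lambda(t)=e^{-ct}$ and every $\mu_v$ is constant: $\mu_v=0$ on $G^\perp$ and $\mu_v=-c$ off it. Because all eigenvalues are exponentials, $\mathcal{E}(t+s)=\mathcal{E}(t)\mathcal{E}(s)$ holds.

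For the rates, insert these $\mu_v$ into \eqref{eq:gamma_def_appendix}:
\begin{equation*}
\gamma_\alpha=-\frac{c}{d^2}\sum_{v\notin G^\perp}\omega^{-\alpha\wedge v}
=\frac{c}{d^2}\sum_{v\in G^\perp}\omega^{-\alpha\wedge v},
\end{equation*}
where the second equality uses that the sum over all $v\in\mathbb{Z}_d\times\mathbb{Z}_d$ vanishes for $\alpha\neq0$. By the same character argument, $\sum_{v\in G^\perp}\omega^{-\alpha\wedge v}$ equals $|G^\perp|$ if $\alpha\in(G^\perp)^\perp$ and $0$ otherwise. The main obstacle is to verify $(G^\perp)^\perp=G$ and $|G^\perp|=d^2/|G|$, which follow from nondegeneracy of the symplectic form on the finite abelian group $\mathbb{Z}_d\times\mathbb{Z}_d$ (Pontryagin duality). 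With these,
\begin{equation*}
\gamma_\alpha=\frac{c}{|G|}\ \text{ for } \alpha\in G\setminus\{0\},\qquad \gamma_\alpha=0 \ \text{ otherwise}.
\end{equation*}
All rates are constant and nonnegative, so the family is a Markovian semigroup. This can be cross-checked directly against the generator \eqref{generator-expression}.

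\textbf{Step 3: necessity.} Assume the family is a semigroup with $\mathcal{E}(0)=\mathbb{1}_d$ and continuity in $t$. Then the scalar function $\lambda(t)$ satisfies $\lambda(t+s)=\lambda(t)\lambda(s)$ with $\lambda(0)=1$, and continuity forces $\lambda(t)=e^{-ct}$ for some real $c$.

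To fix the sign and size of $c$, use the Markovian requirement: by Step 2, the rates are $\gamma_\alpha=c/|G|$ on $G\setminus\{0\}$, so $\gamma_\alpha\ge0$ requires $c\ge0$. Complete positivity $p(t)\in[0,1]$ gives the same conclusion, since $p(t)=\frac{|G|-1}{|G|}(1-\lambda(t))$. The case $c=0$ is the trivial identity map and is excluded by $c\in\mathbb{R}^+$. Inverting $\lambda(t)=1-\frac{|G|}{|G|-1}p(t)$ then recovers the stated form $p(t)=\frac{|G|-1}{|G|}(1-e^{-ct})$.
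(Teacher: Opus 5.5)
Your spectrum computation and your sufficiency argument are the same as the paper's. The paper also finds $\lambda_v=1$ on $G^\perp$ and $\Lambda(t)=1-\frac{|G|}{|G|-1}p(t)$ off it. It then uses $|G^\perp|=d^2/|G|$ and $(G^\perp)^\perp=G$ to get $\gamma_\alpha=-\frac{1}{|G|}\dot\Lambda/\Lambda$ on $G\setminus\{0\}$ and $0$ elsewhere, which equals $c/|G|$ for the exponential $p(t)$.

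You differ in the necessity direction. The paper takes ``Markovian semigroup'' to mean constant nonnegative rates. It sets $\gamma_\alpha=\Gamma>0$ in that rate formula and integrates $\frac{d}{dt}\ln\Lambda=-|G|\Gamma$ with $\Lambda(0)=1$, obtaining $c=|G|\Gamma$. You instead work from the composition law itself. Since $\mathcal{E}(t)$ is diagonal in the Weyl basis with a single nontrivial eigenvalue, $\mathcal{E}(t+s)=\mathcal{E}(t)\mathcal{E}(s)$ reduces to Cauchy's equation $\lambda(t+s)=\lambda(t)\lambda(s)$. Continuity then gives the exponential, and the sign of $c$ follows either from nonnegativity of the rates or simply from $p(t)\in[0,1]$.

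Your route needs only continuity, not differentiability. It also does not presuppose $\Lambda(t)\neq0$, which the paper's use of $\dot\Lambda/\Lambda$ requires. The one step you leave implicit is that $\lambda$ never vanishes. This follows by iterating $\lambda(t)=\lambda(t/2)^2$: a zero at $t_0$ would force zeros at $t_0/2^n$ for all $n$, contradicting continuity at $t=0$ with $\lambda(0)=1$. The paper's route is shorter given its working definition, and it makes the relation $c=|G|\Gamma$ between exponent and rate explicit. Both arguments exclude the trivial case $p\equiv0$ by convention: the paper via $\Gamma>0$, you via $c\in\mathbb{R}^+$.
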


The proof of Proposition \ref{Prop1-weyl-semigroup-condition} is given in Appendix \ref{Prop:Marovian semigroup}.

The special choice of $p(t)$ ensures that all nontrivial eigenvalues evolve exponentially as $e^{-ct}$, which guarantees the semigroup property. By Proposition \ref{Prop1-weyl-semigroup-condition}, the semigroup property depends on the order of the subgroup $|G|$ through the admissible form of the mixing parameter $p(t)$.

\section{Convex combination of isotropic Weyl dynamical maps for any \texorpdfstring{$d\geq2$}{d>2}}\label{sec:4}

The isotropic Weyl dynamical maps $\mathcal{E}_{\textnormal{iso}}(t)$ (provided that all constituent maps share the same mixing parameter $p(t)$) of (\ref{three-types-Weyl-maps}) with the types (1)-(3) can be rewritten as
\begin{equation}\label{three-types-convex-combiantion}
\mathcal{E}_{\textnormal{iso}}(t)=\frac{1}{|G|-1} \sum_{u\in G\setminus\{0\}}\mathcal{E}_{u}^p(t). 
\end{equation}
That is, the isotropic Weyl dynamical maps $\mathcal{E}_{\textnormal{iso}}(t)$ can be regarded as the convex combination of $|G|-1\ (|G|\geq3)$ different Weyl dephasing maps $\mathcal{E}_{u}^p(t)$ with equal weights.

\subsection{Mixing eternally non-Markovian Weyl dephasing maps could generate Markovian semigroups}
Based on the classification of subgroups in Lemma \ref{lem:HNF}, a subgroup $G$ may possess a non-cyclic structure. Nevertheless, any such subgroup can be decomposed into a union of distinct cyclic subgroups, expressed as $G = \bigcup_{k} H_k$. Following this decomposition, for any chosen non-identity element $u:=(i,j) \in G$, we denote its associated cyclic subgroup simply by
\begin{equation}\label{H_subgroup}
    H = \langle u \rangle \subseteq G.
\end{equation}
For each such cyclic subgroup, we claim that the eternal non-Markovianity of Weyl dephasing maps associated with all elements in $H$ can be reduced to analyzing the map $\mathcal{E}_{u}^p(t)$ corresponding to the generator $u$. 

 To demonstrate this, we observe that for any non-identity generator $u=(i,j)$, the eigenvalues \eqref{eq:eigenvalue_general} of the Weyl dephasing map $\mathcal{E}_{u}^p(t)$ reduce to
\begin{equation}\lambda_{v}(t) = 1 - (1 - \omega^{u \wedge v})p(t).\end{equation}

Let $s =\gcd(u, d)= \gcd(i,j, d)$ be the greatest common divisor of $i, j$, and $d$ (where $1 \le s \le \frac{d}{2}$), then the order of the cyclic subgroup $H$ is $\ell = d/s$. Consequently, the Weyl dephasing maps associated with the remaining $\ell-2$ non-identity elements in $H$ share the exact same  eigenvalues as $\mathcal{E}_{u}^p(t)$.

Furthermore, according to \eqref{eq:gamma_def_appendix}, the decay rates $\gamma_{\alpha}(t)$ depend on the eigenvalues of the time-local generator, which take the form:
\begin{equation}
    \mu_v(t) = \frac{\dot{p}(t)(\omega^{u \wedge v} - 1)}{1 - p(t) + p(t)\omega^{u \wedge v}}.
\end{equation}
\color{black}
Notably, $\mu_v(t)$ depends on the index $v$ through the value of the symplectic inner product $u \wedge v$. Therefore, the eternal non-Markovianity of Weyl dephasing maps associated with all elements in $H$ can be reduced to analyzing the map $\mathcal{E}_{u}^p(t)$ corresponding to the generator $u$. Consequently, determining the eternal non-Markovianity of these Weyl dephasing maps amounts to analyzing the sign of the decay rates associated with the generator $u$ within the cyclic subgroup $H$. 

Based on the parity of $\ell$, we investigate the conditions under which the Weyl dephasing maps $\mathcal{E}_{u}^p(t)$ exhibit eternal non-Markovianity.

\begin{Lemma}\label{lem: ENM_Weyl_dephasing}
Suppose that the probability distribution function takes the monotonically increasing form $p(t) = r(1 - e^{-ct})$, where $r \in (0, 1]$ and $c \in \mathbb{R}^+$. Then the Weyl dephasing map $\mathcal{E}_{u}^p(t)$ is eternally non-Markovian if either $\ell \ge 3$ is odd , or $\ell \ge 2$ is even with the amplitude restricted to $0<r \le 1/2$.
\end{Lemma}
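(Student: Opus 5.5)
The plan is to compute every decay rate $\gamma_\alpha(t)$ of the Weyl dephasing map $\mathcal{E}_u^p(t)$ in closed form, by reducing the $d^2$-term Fourier sum \eqref{eq:gamma_def_appendix} to a discrete Fourier transform over the cyclic group $H=\langle u\rangle$ of order $\ell$, and then to read off the sign of one particular rate.

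\textbf{Step 1: reduction to $\mathbb{Z}_\ell$.} The map $v\mapsto u\wedge v$ is a homomorphism from $\mathbb{Z}_d\times\mathbb{Z}_d$ onto $s\mathbb{Z}_d\cong\mathbb{Z}_\ell$, so every fibre has $d^2/\ell$ elements. By the eigenvalue formula for $\mu_v(t)$ displayed before the lemma, $\mu_v(t)$ depends only on $k$, defined through $u\wedge v=sk$. Put $\zeta=e^{2\pi \mathrm{i}/\ell}$, so that $\omega^{u\wedge v}=\zeta^k$, and write $\mu_k$ for the common value. Summing $\omega^{-\alpha\wedge v}$ over a fibre gives zero unless $\alpha\wedge v$ is constant on $\ker(v\mapsto u\wedge v)=H^\perp$. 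That happens exactly when $\alpha\in (H^\perp)^\perp=H$, a fact I would verify with the symplectic duality. Hence $\gamma_\alpha=0$ for $\alpha\notin H$. For $\alpha=au$ with $1\le a\le \ell-1$,
\begin{equation*}
\gamma_{au}(t)=\frac{1}{\ell}\sum_{k\in\mathbb{Z}_\ell}\zeta^{-ak}\,\mu_k(t),\qquad \mu_k=\frac{\dot p(\zeta^k-1)}{1-p+p\zeta^k}.
\end{equation*}

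\textbf{Step 2: exact finite Fourier inversion.} Set $q=p/(1-p)$, which is finite since $p(t)<1$ for $t<\infty$. Using $\mu_k=\frac{\dot p}{p}\bigl[1-\frac{1}{(1-p)(1+q\zeta^k)}\bigr]$ and the finite identity $(1+qz)\sum_{n=0}^{\ell-1}(-qz)^n=1-(-q)^\ell z^\ell=1-(-q)^\ell$ for $z^\ell=1$, I obtain, whenever $(-q)^\ell\neq1$,
\begin{equation*}
\gamma_{au}(t)=-\frac{\dot p}{p(1-p)}\,\frac{(-q)^a}{1-(-q)^\ell},\qquad 1\le a\le \ell-1 .
\end{equation*}
This is an algebraic identity, so it needs no convergence condition on $q$.

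\textbf{Step 3: signs.} Here $\dot p=rce^{-ct}>0$ and $0<p<1$ for $t>0$. Take $a=2$, which requires $\ell\ge3$. Then $(-q)^2=q^2>0$, so $\gamma_{2u}(t)<0$ exactly when $1-(-q)^\ell>0$.
\begin{itemize}
\item For odd $\ell\ge3$ the denominator is $1+q^\ell>0$ for every $r\in(0,1]$.
\item For even $\ell\ge4$ the denominator is $1-q^\ell$, which is positive iff $q<1$, i.e.\ $p<1/2$. The assumption $r\le1/2$ guarantees $p(t)=r(1-e^{-ct})<1/2$ for all finite $t$.
\end{itemize}
Finally, as $t\to0^+$ we have $q\to0$ and $p\sim rct$. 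The prefactor $\dot p/p$ then blows up like $1/t$, while $q^2$ vanishes like $t^2$, so $\gamma_{2u}(0^+)=0$. This gives the eternal non-Markovianity condition.

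\textbf{Main obstacle.} The hard point is Step 1, the vanishing of $\gamma_\alpha$ off $H$ together with the correct fibre counting. It is also the place where the even case $\ell=2$ needs care. There $a=2\equiv0$ is not available, and the only rate is $\gamma_u=\frac{\dot p}{p(1-p)}\frac{q}{1-q^2}$, which is nonnegative for $r\le 1/2$. So for $\ell=2$ the map is not eternally non-Markovian as it stands. I would therefore read the even case as applying to $\ell\ge4$, or check whether the intended statement concerns mixtures; if a genuine negative rate for $\ell=2$ is claimed, the argument above shows it must come from elsewhere, and this is where I expect the proof to need adjustment.
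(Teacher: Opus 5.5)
Your argument is correct and follows essentially the paper's route. The paper also splits phase space into the cosets $V_x$ of $H^\perp$ and uses the character sum over $H^\perp$ to eliminate every rate with $\alpha\notin H$. It then inverts $1/(A+Bz)$ with the same finite geometric-series identity for $z^\ell=1$, arriving at
\[
\gamma_{yu}(t)=\frac{\dot p\,(-p)^{y-1}(1-p)^{\ell-1-y}}{(1-p)^\ell-(-p)^\ell},
\]
which is your formula after substituting $q=p/(1-p)$. The delicate point is not the reduction in your Step 1 but which channel you test in Step 3, and there your version is the right one.

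The paper takes $y=\ell-1$ for both parities. For even $\ell$ with $r\le 1/2$ it asserts that $\gamma_{\ell-1}<0$ ``similar to the odd case''. This is false: for even $\ell$ the factor $(-p)^{\ell-2}$ is positive, so $\gamma_{\ell-1}=\dot p\,p^{\ell-2}/\bigl((1-p)^\ell-p^\ell\bigr)>0$. The negative channels are the even $y$, such as your $y=2$, and these exist only when $\ell\ge 4$. For odd $\ell$, $y=2$ works just as well as $y=\ell-1$, so your single choice handles both parities uniformly.

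Your objection at $\ell=2$ is therefore not a gap in your proposal but a genuine counterexample to the statement as written. The only nonzero rate is $\gamma_u=\dot p/(1-2p)>0$. At $r=1/2$ it is the constant $c/2$, so the map is the $|G|=2$ Markovian semigroup of Proposition~\ref{Prop1-weyl-semigroup-condition}. What you have actually proved is the correct form of the lemma: eternal non-Markovianity for odd $\ell\ge 3$ with any $r\in(0,1]$, and for even $\ell\ge 4$ with $0<r\le 1/2$.
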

The proof of Lemma \ref{lem: ENM_Weyl_dephasing} is given in Appendix \ref{sec:AppendixC}.

\begin{Remark}\label{rem:irreducible_ENM}
\textbf{(Irreducible eternal non-Markovianity)} 

To appreciate the significance of Lemma \textnormal{\ref{lem: ENM_Weyl_dephasing}}, it is instructive to contrast it with the well-known Pauli case. To generate eternal non-Markovianity in the Pauli framework, one must construct convex combinations of two different Pauli dephasing semigroups \textnormal{\cite{Jagadish2025-vd}}, while three-way mixture can not lead to ENM. 

However, Lemma \textnormal{\ref{lem: ENM_Weyl_dephasing}} demonstrates that in higher-dimensional systems ($d \ge 3$), mixing is not a strict prerequisite for memory effects. There exist individual Weyl dephasing maps generated by a single Weyl operator that inherently exhibit ENM behavior. 
\end{Remark}

Combined Proposition \ref{Prop1-weyl-semigroup-condition} with Lemma \ref{lem: ENM_Weyl_dephasing}, we investigate whether the convex combinations of ENM Weyl dephasing maps could generate Markovian semigroups. Specifically, we analyze the structure of the constituent maps within the subgroup $G$.

\begin{Theorem}\label{Thm:1}
Consider the Markovian semigroup $\mathcal{E}_{\textnormal{iso}}(t)$ given by the convex combination \textnormal{(\ref{three-types-convex-combiantion})} of Weyl dephasing maps $\mathcal{E}_{u}^p(t)$ with the probability distribution function $p(t)=\frac{|G|-1}{|G|}(1-e^{-ct})$. For each non-identity generator $u \in G$ with $|G| \geq 3$, let $s=\gcd(u, d)$ and $\ell := d/s$. 
\begin{enumerate}
\item [\rm (1)] If $\ell \geq 3$ is odd, then the Markovian semigroup could be constructed from a convex combination of eternally non-Markovian Weyl dephasing maps.
    
\item [\rm (2)] If $\ell \geq 2$ is even, then the Markovian semigroup is generated by the Weyl dephasing maps that are not eternally non-Markovian.
\end{enumerate}
\end{Theorem}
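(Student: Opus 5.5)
The plan is to combine Proposition \ref{Prop1-weyl-semigroup-condition} with Lemma \ref{lem: ENM_Weyl_dephasing}, using the rewriting (\ref{three-types-convex-combiantion}). Each constituent map $\mathcal{E}_{u}^p(t)$ is a Weyl dephasing map with the common mixing function $p(t)=r(1-e^{-ct})$, where $r=\frac{|G|-1}{|G|}$.

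\textbf{Step 1: the mixture is a semigroup.} By Proposition \ref{Prop1-weyl-semigroup-condition}, the equal-weight mixture $\mathcal{E}_{\textnormal{iso}}(t)$ with this $p(t)$ is a Markovian semigroup. It therefore only remains to determine the memory character of the individual constituents.

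\textbf{Step 2: the range of $r$.} Since $|G|\ge 3$, we have $r=1-\frac{1}{|G|}\in[\frac{2}{3},1)$. In particular $r\in(0,1]$ and $r>\frac12$, so the hypothesis $p(t)=r(1-e^{-ct})$ of Lemma \ref{lem: ENM_Weyl_dephasing} holds. By the discussion preceding that lemma, the ENM character of $\mathcal{E}_{u}^p$ depends on $u$ only through $\ell=d/\gcd(u,d)$, so we may work with a single generator of each cyclic subgroup $H=\langle u\rangle$.

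\textbf{Step 3, part (1).} If $\ell\ge3$ is odd, Lemma \ref{lem: ENM_Weyl_dephasing} applies for every $r\in(0,1]$, so each $\mathcal{E}_{u}^p$ is eternally non-Markovian. The semigroup is then a convex combination of ENM dephasing maps.

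\textbf{Step 4, part (2).} If $\ell$ is even, Lemma \ref{lem: ENM_Weyl_dephasing} only gives ENM when $r\le\frac12$, which fails here. So we must directly show that $\mathcal{E}_{u}^p$ is not eternally non-Markovian when $r>\frac12$. This is the main obstacle.

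First, compute the decay rates of a single dephasing map via (\ref{eq:gamma_def_appendix}) with
\[
\mu_v(t)=\frac{\dot p(\omega^{k}-1)}{1-p+p\,\omega^{k}},\qquad k=u\wedge v .
\]
As $v$ ranges over $\mathbb{Z}_d\times\mathbb{Z}_d$, the value $k$ runs uniformly over the subgroup $s\mathbb{Z}_d$, i.e.\ $\ell$ residues. Hence $\gamma_\alpha$ vanishes unless $\alpha\in\langle u\rangle$, and on $\langle u\rangle$ it reduces to a one-dimensional discrete Fourier transform of $\mu$ over $\mathbb{Z}_\ell$.

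For even $\ell$, the value $\omega^{k}=-1$ occurs, giving the eigenvalue $\lambda=1-2p(t)$. This eigenvalue crosses zero at a finite time $t^*$ when $r>\frac12$, because $2r>1$. At that time the map becomes non-invertible and the corresponding $\mu$ diverges. The rates therefore cannot keep a fixed negative sign on all of $(0,\infty)$: the term $\frac{-2\dot p}{1-2p}$ changes sign across $t^*$, being $-\infty$ before and $+\infty$ after. This dominates the Fourier sum, so the rates flip sign near $t^*$.

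Hence the ENM condition, $\gamma_\alpha(t)<0$ for all $t>0$, fails for every channel. I would first try to make this rigorous by isolating the singular term near $t^*$, where the other terms stay bounded. The sign of the singular contribution to $\gamma_\alpha$ is $\omega^{-\alpha\wedge v}$ evaluated at the $k=d/2$ entries, which is $\pm1$. Either sign suffices, because the divergence reverses sign across $t^*$.

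I expect this exclusion of ENM for $r>\frac12$ to be the delicate part. A case needing separate care is $\ell=2$, where the map is ordinary Pauli-like dephasing with a single rate $\gamma=\frac{\dot p}{2(1-2p)}$. This rate is positive for $t<t^*$, so the map is not ENM, and the argument is immediate in that case.
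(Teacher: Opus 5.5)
Your argument is correct. Part (1) coincides with the paper's proof: Proposition~\ref{Prop1-weyl-semigroup-condition} plus the odd-$\ell$ clause of Lemma~\ref{lem: ENM_Weyl_dephasing}. Like the paper, you tacitly need every non-identity element of $G$ to have odd order (e.g.\ $|G|$ odd) before you can say that \emph{every} constituent is ENM.

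For part (2) your route is genuinely different and more careful than the paper's. The paper substitutes $r=\frac{|G|-1}{|G|}$ into the hypothesis $r\le\frac12$ of Lemma~\ref{lem: ENM_Weyl_dephasing}, finds it violated, and concludes that the constituents are not ENM. Read literally, that inverts a sufficient condition. The missing converse is supplied only by case (ii) of the lemma's appendix proof, which uses the closed form $\gamma_y=\dot p\,(-B)^{y-1}A^{\ell-1-y}/(A^\ell-B^\ell)$: the sign change of $A^\ell-B^\ell$ at $t^*=\frac1c\ln\frac{2r}{2r-1}$ flips every rate.

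You instead prove non-ENM directly from the zero of the eigenvalue $1-2p(t)$:
\begin{itemize}
\item near $t^*$ all other $\mu_x$ stay bounded, because $1-p+p\,\omega_\ell^{x}$ can vanish only when $\omega_\ell^{x}=-1$;
\item the singular part of $\gamma_{yu}$ is $\frac{(-1)^y}{\ell}\cdot\frac{-2\dot p}{1-2p}$ with $\dot p(t^*)>0$, so it diverges with opposite signs on the two sides of $t^*$;
\item $\gamma_\alpha\equiv0$ for $\alpha\notin\langle u\rangle$.
\end{itemize}

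What your route buys is independence from the closed-form sign bookkeeping. That bookkeeping is in fact unreliable in the paper for even $\ell$. When $A^\ell-B^\ell>0$, the rate $\gamma_{\ell-1}\propto B^{\ell-2}$ is positive, not negative as claimed; the negative rates are those with even $y$. For $\ell=2$ there is no negative rate at all, so the lemma's even clause fails there. What the paper's closed form buys is global sign information on $(0,t^*)$ and $(t^*,\infty)$, which part (2) does not need.

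One small slip: for $\ell=2$ the rate is $\gamma_u=\dot p/(1-2p)$, from $\mu=-2\gamma_u=-2\dot p/(1-2p)$, not half of that. The sign conclusion is unchanged.
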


\begin{proof}
The probability distribution function for the constituent maps is given by $p(t) = r(1-e^{-ct})$ with the specific amplitude $r = \frac{|G|-1}{|G|}$.

(1) According to Lemma \ref{lem: ENM_Weyl_dephasing}, any Weyl dephasing map $\mathcal{E}_{u}^p(t)$ is ENM provided that $\ell \ge 3$ is odd, regardless of the value of $r \in (0,1]$. Since this condition holds for every $u \in G \setminus \{0\}$, every constituent map in the convex combination (\ref{three-types-convex-combiantion}) is ENM. Furthermore, as established in Proposition \ref{Prop1-weyl-semigroup-condition}, the resulting mixture $\mathcal{E}_{\textnormal{iso}}(t)$ forms a Markovian semigroup. This demonstrates the counter-intuitive result that a Markovian semigroup could be constructed from the mixture of eternally non-Markovian dynamics.

(2) According to Lemma \ref{lem: ENM_Weyl_dephasing}, a Weyl dephasing map with even $\ell$ is ENM if the amplitude satisfies $0 < r \leq \frac{1}{2}$. We substitute the specific amplitude for the semigroup into this inequality:
\begin{equation*}
    r = \frac{|G|-1}{|G|} \leq \frac{1}{2} \implies |G| \leq 2.
\end{equation*}
However, the convex combination (\ref{three-types-convex-combiantion}) is defined for subgroups of order $|G| \geq 3$, which is a contradiction. Consequently, the constituent Weyl dephasing maps in the mixture are not ENM. Thus, the Markovian semigroup is generated by mixing maps that are not eternally non-Markovian.
\end{proof}

\begin{Remark}\label{rem:4}
The Markovian semigroup generated by the mixture of eternally non-Markovian Weyl dephasing maps is related to the parity of the ratio $\ell:=d/s$. In contrast to Ref. \cite{wudarski2016markovian}, which considered a quantum channel defined by a projector, the present work focuses on random unitary dynamical evolution, specifically the Weyl map. The first part of Theorem \ref{Thm:1} provides an additional contribution to the theory that convex combinations of eternally non-Markovian Weyl dephasing maps may generate a Markovian semigroup. 
\end{Remark}

\subsection{Convex combinations of $N$ different Weyl semigroups lead to eternal non-Markovianity}

The emergence of non-Markovianity through convex combinations of Pauli dynamical maps has been investigated in \cite{jagadish_convex_2020,Jagadish2025-vd}. As shown in \cite{Jagadish2025-vd}, an individual Pauli map cannot be ENM. ENM can arise from mixtures of two distinct Pauli dephasing semigroups, whereas mixtures involving all three Pauli dephasing semigroups fail to produce ENM. To this end, we aim to extend this discussion and explore the eternal non-Markovianity obtained by mixing high-dimensional Weyl semigroups.

In this subsection, we investigate the emergence of ENM under the convex combination of $N$ different isotropic Weyl semigroups, expressed as:
\begin{equation}\label{cc-n-Weyl-semigroups}
 \tilde{\mathcal{E}}(t)=\sum_{k=1}^N x_k\mathcal{E}_{\textnormal{iso}}^{(k)}(t). 
\end{equation}
Here, $N$ denotes the number of constituent Weyl semigroups, and $x_k$ are the mixing coefficients that satisfying $0 < x_k < 1$ and $\sum_{k=1}^N x_k = 1$. We assume that all constituent semigroups $\mathcal{E}_{\textnormal{iso}}^{(k)}(t)$, defined in (\ref{three-types-Weyl-maps}), correspond to subgroups $G_k$ of a uniform order $|G_k| = K\ (K\ge 2)$. Accordingly, the probability distribution function takes the form $p(t)=\frac{K-1}{K}(1-e^{-ct})$, where $K$ is given by \eqref{K_order}.

\color{black}

Based on the three types of the subgroup $G_k$, we analyze the eternal non-Markovianity of convex combinations (\ref{cc-n-Weyl-semigroups}).

\begin{Theorem}\label{Thm:General_NM_Overlap}
Consider the convex combination \eqref{cc-n-Weyl-semigroups} with $N$ distinct Weyl semigroups. Assume all the Weyl semigroups correspond to the subgroups $G_k$ with the same order $|G_k| = K\ (K\ge2)$, and the probability distribution function $p(t) = \frac{K-1}{K}(1-e^{-ct})$, where $K$ is given by \eqref{K_order}. Then the convex combination $\tilde{\mathcal{E}}(t)$ is eternally non-Markovian if the number of mixtures $N$ satisfies the following upper bound:
\begin{equation}\label{eq:N_bound_general}
    2 \leq N < \textnormal{min}\left\{\frac{d^2 - 1}{K - 1},\ \mathcal{N}(K) \right\},
\end{equation}
where $\mathcal{N}(K)$ is given by \eqref{eq:total_K}. 

\end{Theorem}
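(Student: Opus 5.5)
The plan is to compute the spectrum of $\tilde{\mathcal{E}}(t)$ explicitly, read off its decay rates from \eqref{eq:gamma_def_appendix}, and show that some channel $\alpha$ not covered by any of the $G_k$ has a strictly negative rate for every $t>0$.

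\textbf{Step 1: spectrum of each constituent.} By Proposition~\ref{Prop1-weyl-semigroup-condition}, with $p(t)=\frac{K-1}{K}(1-e^{-ct})$, each $\mathcal{E}^{(k)}_{\mathrm{iso}}(t)$ has eigenvalue $\lambda^{(k)}_v(t)=1$ for $v\in G_k^\perp$ and $\lambda^{(k)}_v(t)=e^{-ct}$ otherwise. This follows from \eqref{eq:eigenvalue_general}: character orthogonality on $G_k$ gives $\sum_{u\in G_k\setminus\{0\}}\omega^{u\wedge v}=-1$ when $v\notin G_k^\perp$.

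\textbf{Step 2: spectrum of the mixture.} Let $X_v:=\sum_{k:\,v\in G_k^\perp}x_k\in[0,1]$. Then
\begin{equation*}
\tilde\lambda_v(t)=X_v+(1-X_v)e^{-ct}.
\end{equation*}
Hence $\tilde\lambda_v(t)>0$, so the map is invertible, and
\begin{equation*}
\tilde\mu_v(t)=\frac{-c(1-X_v)e^{-ct}}{X_v+(1-X_v)e^{-ct}}=-\frac{c(1-X_v)}{X_ve^{ct}+1-X_v}.
\end{equation*}

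\textbf{Step 3: locating a negative rate.} The decay rates are $\gamma_\alpha(t)=\frac{1}{d^2}\sum_v\omega^{-\alpha\wedge v}\tilde\mu_v(t)$. Since $\tilde\mu_v$ is a function of $X_v$ alone, group the $v$ according to the sets $G_k^\perp$. Using inclusion–exclusion over the subgroups $\bigcap_{k\in S}G_k^\perp$, together with the identity $\sum_{v\in H^\perp}\omega^{-\alpha\wedge v}=|H^\perp|\,\mathbb{1}[\alpha\in H]$, the rate $\gamma_\alpha$ becomes a combination of indicator functions of $\alpha$ lying in the subgroups $\langle G_k: k\in S\rangle$.

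At $t=0$ we have $\tilde\mu_v(0)=-c(1-X_v)$, which equals $-c\sum_k x_k\mathbb{1}[v\notin G_k^\perp]$. Therefore $\gamma_\alpha(0)=\sum_k x_k\gamma^{(k)}_\alpha$, and this vanishes for every $\alpha\notin\bigcup_kG_k$.

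This is where the bound \eqref{eq:N_bound_general} enters:
\begin{itemize}
\item The condition $N<\mathcal{N}(K)$ guarantees that $N$ distinct subgroups of order $K$ actually exist.
\item The condition $N(K-1)<d^2-1$ guarantees that $\bigcup_k(G_k\setminus\{0\})$ does not exhaust $\mathbb{Z}_d\times\mathbb{Z}_d\setminus\{0\}$.
\end{itemize}
So there is some $\alpha\neq0$ outside every $G_k$. Since $N\ge2$ and the $G_k$ are distinct, I would choose $\alpha$ inside the span $\langle G_1,G_2\rangle$ but outside $G_1\cup G_2$ (and outside the other $G_k$ if possible). For such $\alpha$, the $t$-dependent part of $\gamma_\alpha$ comes from the terms whose $v$ lie in $G_1^\perp\cap G_2^\perp$ but not in $(G_1+G_2+\dots)^\perp$.

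\textbf{Step 4: the sign (main obstacle).} The crux is showing $\gamma_\alpha(t)<0$ for all $t>0$. The key tool is strict concavity. The function $f(X)=-\frac{c(1-X)}{Xe^{ct}+1-X}$ satisfies $f(0)=-c$ and $f(1)=0$, and is strictly convex in $X$ for $t>0$. Hence $f(x_1+x_2)<f(x_1)+f(x_2)-f(0)$, i.e. the mixture term strictly exceeds linear interpolation from below. The resulting second-order difference should reduce, for the chosen $\alpha$, to
\begin{equation*}
\gamma_\alpha(t)\propto f(x_1)+f(x_2)-f(0)-f(x_1+x_2)<0.
\end{equation*}

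I expect the hard part to be controlling the contributions of the other $G_k$ ($k\ge3$) and of the overlaps in the inclusion–exclusion. I would first handle the case where $\alpha$ lies in no subgroup generated by fewer than two of the $G_k$. In the general case I would fall back on checking the sign via the convexity of $f$ applied to all the relevant $X$-values at once.
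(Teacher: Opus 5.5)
Steps 1--3 are sound, and your inclusion--exclusion can be made exact. Write $G_S=\sum_{k\in S}G_k$ and $\psi_S(t)=\sum_{R\subseteq S}(-1)^{|S|-|R|}f\bigl(\sum_{k\in R}x_k\bigr)$. Then, for $\alpha\neq0$,
\[
\gamma_\alpha(t)=\sum_{\emptyset\neq S\subseteq\{1,\dots,N\}}\frac{\psi_S(t)}{|G_S|}\,\mathbb{I}_{G_S}(\alpha).
\]
Step 4, however, has the curvature backwards. With $E=e^{ct}>1$, your $f(X)=-c(1-X)/(1+(E-1)X)$ is strictly \emph{concave}, not convex; for example $f(\tfrac12)=-c/(E+1)>-c/2$. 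For $N=2$ and $\alpha\in(G_1+G_2)\setminus(G_1\cup G_2)$, the rate is $\psi_{\{1,2\}}/|G_1+G_2|$ with $\psi_{\{1,2\}}=f(x_1+x_2)+f(0)-f(x_1)-f(x_2)$. This is the negative of the expression you display, and it is $<0$ by concavity. Your two sign slips cancel. Once corrected, your argument does settle $N=2$: such an $\alpha$ exists because distinct subgroups of equal order are incomparable, and a group is never a union of two proper subgroups.

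The genuine gap is $N\ge3$, which you leave open, and it cannot be closed because the statement is false there. The formula above gives $\gamma_\alpha\equiv0$ whenever $\alpha\notin\sum_kG_k$. So the counting condition $N(K-1)<d^2-1$, which only produces \emph{some} uncovered index, is not enough.

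Concretely, take $d=4$, $K=4$, $N=3$, with $G_1=\langle(1,0)\rangle$, $G_2=\langle(1,2)\rangle$, $G_3=\{0,2\}\times\{0,2\}$. The bound $2\le3<\min\{5,\mathcal{N}(4)\}=5$ holds. But $G_1\cup G_2\cup G_3=\mathbb{Z}_4\times2\mathbb{Z}_4$ is itself a subgroup, so every uncovered $\alpha$ has $\gamma_\alpha\equiv0$. Every covered $\alpha$ has $\gamma_\alpha(0)=\frac c4\sum_{k:\alpha\in G_k}x_k>0$; for instance $\gamma_{(1,0)}=\frac18[f(x_1)-f(x_2)-f(x_3)]$. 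The mixture is therefore CP-divisible on an initial interval and is not ENM. Your candidate set $(G_1+G_2)\setminus(G_1\cup G_2)=\{(0,2),(2,2)\}$ lies inside $G_3$, so the ``if possible'' choice fails here.

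The paper's proof breaks at the same point. Case~1(ii) of Appendix~\ref{sec:AppendixE} asserts an uncovered $\alpha$ with $\omega^{-\alpha\wedge u}=1$ on $S_{\text{int}}$, i.e.\ $\alpha\in\bigcap_{i\neq j}(G_i+G_j)\setminus\bigcup_kG_k$. That set is empty in this example, even though the union does not span the phase space. When such an $\alpha$ does exist (e.g.\ $N=2$, or $d$ prime with $K=d$), the paper's overlap-gap bound gives $\gamma_\alpha<0$ for all $t>0$. So a condition of this kind has to be added to the statement.
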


The proof of Theorem \ref{Thm:General_NM_Overlap} is given in Appendix \ref{sec:AppendixE}.

\begin{Remark}\label{rem:compare_with_pauli_mixture}
\textbf{(Comparison with Pauli map mixtures)} 
Theorem \textnormal{\ref{Thm:General_NM_Overlap}} provides a general criterion for the emergence of ENM from the mixture of Markovian semigroups. The significance of the upper bound in \eqref{eq:N_bound_general} becomes particularly evident when we compare it with the Pauli maps.

For qubit system, the discrete phase space $\mathbb{Z}_2 \times \mathbb{Z}_2$ has only $d^2-1 = 3$ non-identity elements, and the cyclic subgroups generated by these elements all have order $K=2$. Substituting these specific values into our bound \eqref{eq:N_bound_general} yields
\begin{equation*}
    2\le N < \textnormal{min}\left\{\frac{2^2 - 1}{2 - 1},\ \mathcal{N}(2)\right\}= 3.
\end{equation*}
Since the number of mixtures $N$ must be an integer, this inequality enforces $N=2$. This recovers the well-known Pauli result: mixing two different Pauli dephasing semigroups guarantees ENM \textnormal{\cite{Jagadish2025-vd}}. However, mixing all three Pauli dephasing semigroups covers the entire operator basis, yielding the completely depolarizing map, which is not ENM.

For high-dimensional qudit systems, the phase space capacity $d^2-1$ grows quadratically, while the subgroup coverage $K-1$ (e.g., $K=d$) typically grows only linearly. This leads to a significantly larger upper bound for $N$. For example, when $d=3$ and $K=3$, the bound becomes $2\le N < 4$, allowing the three-way mixing and still yielding ENM. Consequently, in Weyl dynamical maps, one can mix a much larger number of distinct Markovian semigroups and still preserve ENM.
\end{Remark}

\section{Examples: Weyl semigroups mixtures for the case of \texorpdfstring{$d=3$}{d=3}}\label{sec:5}

In this section, we analyze the specific case of $d=3$ in Theorem \ref{Thm:General_NM_Overlap}. The discrete phase space $\mathbb{Z}_3 \times \mathbb{Z}_3$ contains $d^2=9$ elements. According to Lemma \ref{lem:counting_general}, there are $\mathcal N(3)=\sigma_1(3)=4$ distinct subgroups of order $3$. Furthermore, any two distinct subgroups satisfy $G_i \cap G_j = \{(0,0)\}$ for $i \neq j$.

We consider the convex combination of $N$ distinct isotropic Weyl semigroups, $\tilde{\mathcal{E}}(t) = \sum_{k=1}^N x_k \mathcal{E}_{\textnormal{iso}}^{(k)}(t)$, where each Weyl semigroup $G_k$ has the same order $K=3$ and the probability distribution function $p(t) = \frac{2}{3}(1-e^{-ct})$. We focus on the nontrivial cases $2\le N\le4$.

\subsection{Eternal non-Markovianity: \texorpdfstring{$N=2$}{N=2} and \texorpdfstring{$N=3$}{N=3}}
For the prime dimension $d=3$ and the order of subgroup $|G_k|=3$, the symplectic dual subgroups satisfy $G_k^\perp=G_k$ for all $k=1,\dots,N$. Since each subgroup of order $3$ is a one-dimensional isotropic subspace of the discrete phase space, this implies that the Weyl operators within each subgroup $G_k$ are mutually commuting.

For $N=2$ and $N=3$, the number of mixtures satisfies the condition $N<4$ and the union of the subgroups $G_k$ does not cover the entire discrete phase space $\mathbb{Z}_3 \times \mathbb{Z}_3$. Therefore, there always exists at least one index $\alpha\notin \bigcup_{k=1}^N(G_k\setminus\{0\})$ such that the decay rate $\gamma_\alpha(t)$ is negative. In this case, the intersection set of \eqref{Inter_set} satisfies $S_{\text{int}}= \varnothing$, which implies $M(u,t) = 0$ in \eqref{M(u,t)}. Then the decay rate \eqref{eq:gamma_rigorous} reduces to
\begin{align}\label{d=3_reduce-decay-rate}
\gamma_\alpha(t) = \frac{1}{9} \left( c - \sum_{k=1}^N f(x_k) \right),
\end{align}
where $f(x) = \frac{c x}{x + (1-x)e^{-ct}}$. Since $x_k+(1-x_k)e^{-ct}<1 \ (t>0),$
it follows that
\[
f(x_k)
=
\frac{cx_k}{x_k+(1-x_k)e^{-ct}}
>
cx_k.
\]
Therefore,
\[
\sum_{k=1}^N f(x_k)
>
c\sum_{k=1}^N x_k
=
c,
\]
which implies $\gamma_\alpha(t)<0$ for any $t>0$. Consequently, these convex combinations are ENM.

\begin{Remark}
For the three-way mixing, there always exist at least two negative decay rates.  Moreover, the example 2 of Ref. \cite{wudarski2015} is a special case of our result. In fact, consider the mixture of  the following three isotropic Weyl semigroups:
\begin{align*}
\mathcal{E}^{(1)}_{\textnormal{iso}}(t)(\rho)&=(1-p(t))\rho+\frac{p(t)}{2}\left(U_1\rho U_1^\dagger+U_2\rho U_2^\dagger\right),  \\
\mathcal{E}^{(2)}_{\textnormal{iso}}(t)(\rho)&=(1-p(t))\rho+\frac{p(t)}{2}\left(U_3\rho U_3^\dagger+U_6\rho U_6^\dagger\right) , \\
\mathcal{E}^{(3)}_{\textnormal{iso}}(t)(\rho)&=(1-p(t))\rho+\frac{p(t)}{2}\left(U_5\rho U_5^\dagger+U_7\rho U_7^\dagger\right).  \\
\end{align*}
For notational convenience, the double index $(i,j)$ of the Weyl operators $U_{ij}$ is mapped to a single index $\alpha$ via the relation $\alpha = 3i + j$, where $i, j \in \{0, 1, 2\}$. Let the mixing coefficients $x_k=\frac{1}{3}\ (k=1,2,3)$ and $c=3c'$, then the decay rates (\ref{d=3_reduce-decay-rate}) reduce to
\begin{align*}
&\gamma_k(t)=\frac{c'}{3} \ \ (k\neq4,8), \\
&\gamma_4(t)=\gamma_8(t)=-\frac{2c'}{3}\frac{e^{2c't}-e^{c't}}{e^{2c't}+2e^{c't}}.
\end{align*}
The corresponding generators are the three isotropic Weyl semigroups $\mathcal{E}^{(k)}_{\textnormal{iso}}(t)
\ (k=1,2,3)$ is generated by
\begin{align*}
\mathcal{L}^{(1)}(t)(\rho)&=c'(U_1\rho U_1^\dagger+U_2\rho U_2^\dagger-2\rho), \\
\mathcal{L}^{(2)}(t)(\rho)&=c'(U_3\rho U_3^\dagger+U_6\rho U_6^\dagger-2\rho), \\
\mathcal{L}^{(3)}(t)(\rho)&=c'(U_5\rho U_5^\dagger+U_7\rho U_7^\dagger-2\rho),
\end{align*}
respectively. 
\end{Remark}

\subsection{Non-ENM and conditional Markovianity: \texorpdfstring{$N=4$}{N=4}}
For the four-way mixing, the union of these subgroups covers the entire discrete phase space $\mathbb{Z}_3\times \mathbb{Z}_3$. Every nonzero index $\alpha$ belongs to exactly one subgroup $G_j$. By \eqref{simplify_decay_rate}, the decay rate associated with such an index $\alpha$ is
\begin{align}\label{four-way-mixing}
\gamma_\alpha(t)= \frac{1}{9} \left[ c - \sum_{j\neq k} f(x_j) + 2f(x_k) \right].
\end{align}

In this full-coverage scenario, we demonstrate that the convex combination can exhibit either Markovian or non-Markovian behavior, which depends on the choice of the mixing coefficients $x_k$.
\begin{enumerate}
\item [\rm (1)] \textbf{Markovian regime (uniform mixing):} Consider the case of uniform mixing where $x_k = 1/4$ for all $k=1,2,3,4$. Then the decay rate (\ref{four-way-mixing}) reduces to
\begin{align}
\gamma_\alpha(t)= \frac{1}{9} \left[ c - f(1/4)\right]=\frac{c}{3(e^{ct}+3)},
\end{align}
which is positive for any $t\geq0.$ Since the decay rates depend continuously on the mixing coefficients $x_k$, positivity at the uniform point implies the existence of a neighborhood of uniform mixtures for which the dynamics remains Markovian.

\item [\rm (2)] \textbf{Non-Markovian regime (non-uniform mixing):} Consider the limit where the weight of a specific Weyl semigroup (say $\mathcal{E}_{\textnormal{iso}}^{(k)}(t)$) approaches zero: $x_k \to 0$, while keeping the summation of the rest weights approaches $1$: $\sum_{j \neq k} x_j \approx 1$. Thus, we have $\lim_{x_k \to 0} 2f(x_k) = 0$.
Denote $S_{\text{rest}} = \sum_{j \neq k} f(x_j)$ as the summation term. Since
\begin{align*}
x_j+(1-x_j)e^{-ct}&=x_j(1-e^{-ct})+1 \\
&<\sum_{j \neq k}x_j(1-e^{-ct})+1
\qquad (t>0), 
\end{align*}
one has 

\begin{align*}
 \sum_{j \neq k} f(x_j)&>\frac{c\sum_{j \neq k}x_j}{\sum_{j \neq k}x_j+(1-\sum_{j \neq k}x_j)e^{-ct}}=f(\sum_{j \neq k} x_j).
\end{align*}

Therefore, we have
\begin{align*}
S_{\text{rest}} = \sum_{j \neq k} f(x_j)> f(\sum_{j \neq k} x_j)\approx f(1)=c.
\end{align*}

Consequently, the limit of decay rate satisfies 
$$\lim_{x_k \to 0}\gamma_\alpha(t)= \frac{1}{9} \left( c - S_{\text{rest}}\right)<0$$
for any mixing coefficients $x_k$ and $t\ge 0.$

This shows that sufficiently non-uniform mixtures can produce negative decay rates and hence non-Markovian dynamics. In particular parameter regimes, the dynamics may become eternally non-Markovian.
\end{enumerate}

\subsection{Comparison: Generalized Pauli maps \emph{vs} Weyl maps}

Generalized Pauli maps, which have been extensively investigated in Refs. \cite{chruscinski2016generalized,siudzinskajpa2020}, are defined as:
\begin{equation}\label{GPC}
\mathcal{E}_{\text{GP}}(t)(\rho) = q_0(t)\rho + \frac{1}{d-1}\sum_{\alpha=1}^{d+1}q_{\alpha}(t)\mathbb{V}_\alpha(\rho). 
\end{equation}
Here, $\{q_\alpha(t)\}_{\alpha=0}^{d+1}$ constitutes a probability distribution. For each $\alpha=1,\dots, d+1$, the map acts via $\mathbb{V}_\alpha(\rho)=\sum_{k=1}^{d-1}V_{\alpha}^k\rho (V_{\alpha}^{k})^\dagger$, where the unitary operators are defined as $V_{\alpha}=\sum_{l=0}^{d-1}\omega^l|\psi_l^{(\alpha)}\rangle\langle \psi_l^{(\alpha)}|$ with $\omega=e^{2\pi\mathrm{i}/d}$. The basis sets $\mathcal{B}_\alpha=\{|\psi_l^{(\alpha)}\rangle\}_{l=0}^{d-1}$ correspond to mutually unbiased bases (MUBs). It is established that a complete set of $d+1$ MUBs exists when the dimension $d$ is a prime power \cite{wootters1989,durt2010}. Consequently, generalized Pauli maps are typically defined and discussed within the context of prime power dimensions.

We demonstrate that generalized Pauli maps can be realized as special cases of Weyl dynamical maps. To illustrate this connection explicitly, consider the case of $d=3$. The MUBs are given by
\begin{align*}
\mathcal{B}_1 &= \frac{1}{\sqrt{3}}\left\{ (1,1,1)^T, (1,\omega,\omega^2)^T, (1,\omega^2,\omega)^T \right\}, \\
\mathcal{B}_2 &= \left\{ (1,0,0)^T, (0,1,0)^T, (0,0,1)^T \right\}, \\
\mathcal{B}_3 &= \frac{1}{\sqrt{3}}\left\{ (1,\omega,\omega)^T, (1,\omega^2,1)^T, (1,1,\omega^2)^T \right\}, \\
\mathcal{B}_4 &= \frac{1}{\sqrt{3}}\left\{ (1,\omega^2,\omega^2)^T, (1,1,\omega)^T, (1,\omega,1)^T \right\},
\end{align*}
where  superscript $T$ denotes the transpose. By comparing these with the standard Weyl operators, we can identify the following relationships between the unitary operators $V_\alpha$ and the Weyl operators $U_k$:
\begin{equation*}
V_1=U_1, \quad V_2=U_3, \quad V_3=\omega^2 U_4, \quad V_4=\omega U_7.
\end{equation*}
Note that the global phase factors cancel in the unitary conjugation
$V\rho V^\dagger.$ Consequently, the generalized Pauli map \eqref{GPC} for $d=3$ can be reformulated in terms of Weyl operators as:
\begin{align}\label{d=3-GPC}
\mathcal{E}_{\text{GP}}(t)(\rho) &= q_0(t)\rho + \frac{1}{2}\bigg[ q_1(t)\left(U_1\rho U_1^\dagger + U_2\rho U_2^\dagger\right) \nonumber\\
&\quad + q_2(t)\left(U_3\rho U_3^\dagger + U_6\rho U_6^\dagger\right) \nonumber\\
&\quad + q_3(t)\left(U_4\rho U_4^\dagger + U_8\rho U_8^\dagger\right) \nonumber\\
&\quad + q_4(t)\left(U_5\rho U_5^\dagger + U_7\rho U_7^\dagger\right) \bigg].
\end{align}
Comparing this with the Weyl dynamical map defined in \eqref{double-indices-map}, we observe that \eqref{d=3-GPC} corresponds to a Weyl map with specific constraints on the coefficients: $p_0(t)=q_0(t)$, and for the non-identity terms, pairs of indices share the same weight, i.e., $p_1(t)=p_2(t)=\frac{1}{2}q_1(t)$, $p_3(t)=p_6(t)=\frac{1}{2}q_2(t)$, $p_4(t)=p_8(t)=\frac{1}{2}q_3(t), p_5(t)=p_7(t)=\frac{1}{2}q_4(t)$. More generally, in prime-power dimensions, generalized Pauli maps may be represented as Weyl maps whose weights are distributed uniformly over commuting subsets associated with mutually unbiased bases.

This implies that the analysis of convex combinations of Weyl semigroups naturally encompasses the convex combinations of generalized Pauli semigroups. For instance, consider the specific case in \eqref{d=3-GPC} where $q_0(t)=1-p(t)$, $q_1(t)=x_1 p(t)$ and $q_2(t)=x_2 p(t)$ with $x_1+x_2=1$, while $q_3(t)=q_4(t)=0$. Let the probability distribution function be $p(t)=\frac{2}{3}(1-e^{-ct})$. In this setting, the mixture leads to four negative decay rates for any mixing coefficients $0 < x_k < 1$. This result generalizes the two-way mixing scenario with equal weights ($x_1=x_2=\frac{1}{2}$) previously investigated in example 5 of Ref. \cite{siudzinskajpa2020}, demonstrating that the mixture can produce four negative decay rates over a broad range of mixing coefficients.

\section{Conclusions and discussions}\label{sec:6}
In this work, we developed a comprehensive framework for analyzing Weyl dynamical maps and the emergence of non-Markovianity in finite-dimensional open quantum systems. By combining tools from quantum dynamical semigroup theory, finite phase-space algebra, and subgroup classification over $\mathbb{Z}_d \times \mathbb{Z}_d$, we established a unified algebraic approach to the study of Weyl maps and their convex structures.

A central structural contribution of this paper is the complete classification of subgroups of the discrete phase space $\mathbb{Z}_d \times \mathbb{Z}_d$ using Hermite normal form techniques. This classification enabled us to identify and characterize three distinct classes of Weyl dynamical maps, including cyclic, split rank-$2$, and non-split rank-$2$ structures. In particular, we derived explicit counting formulas for subgroups with arbitrary order and related their algebraic properties directly to the dynamical behavior of the associated Weyl maps. These results establish a concrete bridge between finite abelian group theory and the theory of open quantum dynamics.

Building upon this framework, we investigated the semigroup structure of isotropic Weyl dynamical maps. We demonstrated that anisotropic Weyl maps with nonuniform weight distributions cannot generate semigroups in the presence of multiple distinct nontrivial eigenvalues. In contrast, we provided a characterization of the necessary and sufficient conditions under which isotropic Weyl maps form Markovian semigroups. These findings reveal that the semigroup property is  constrained by the subgroup structure underlying the Weyl maps.

One of the main findings of this work is the discovery of \emph{irreducible eternal non-Markovianity} in higher-dimensional Weyl dynamics. Specifically, we demonstrated that individual Weyl dephasing maps can themselves exhibit eternal non-Markovian behavior without requiring any convex mixing mechanism. This phenomenon has no analogue in the qubit Pauli setting, where eternal non-Markovianity can only emerge through carefully engineered mixtures of distinct semigroups \cite{Jagadish2025-vd}. Our results revealed a fundamental qualitative distinction between qubit and qudit non-Markovian dynamics, showing that higher-dimensional systems possess intrinsically richer memory structures.

Beyond this intrinsic mechanism, we discovered two complementary and nontrivial effects associated with convex mixtures of Weyl maps. First, we proved that convex combinations of eternally non-Markovian Weyl dephasing maps can generate Markovian semigroups. This demonstrated that memory effects are not additive under mixing and may instead be completely suppressed by collective interference between Weyl dynamical maps. Second, we established a general condition under which convex combinations of $N$ distinct Weyl semigroups generate eternal non-Markovianity. In contrast to the Pauli case, where only two-way mixing can sustain ENM, high-dimensional Weyl systems permit substantially larger mixtures while still preserving eternal memory effects. The emergence of ENM was shown to depend on the subgroup coverage of the discrete phase space, thereby linking non-Markovianity to algebraic structures.

We further illustrated these phenomena explicitly in the qutrit case $d=3$, where the algebra structures of the discrete phase space become particularly transparent. In this setting, we analyzed two-, three-, and four-way mixtures of Weyl semigroups and demonstrated the transition among Markovian, non-Markovian and eternally non-Markovian regimes depending on the subgroup coverage and mixing coefficients. Additionally, we showed that generalized Pauli maps constitute a special subclass of Weyl maps, thereby embedding several previously known results into the broader Weyl framework developed in this work.

Overall, our results reveal that the algebraic structure of discrete Weyl phase space plays a fundamental role in governing memory effects in open quantum systems. The interplay among subgroup structures, convexity, and divisibility uncovered here significantly extends the current theory of quantum non-Markovianity beyond the qubit regime. More broadly, the present work suggests that finite phase-space methods provide a natural and powerful tool for understanding high-dimensional quantum noise and memory effects.

Several important directions remain open for future investigations. It would be particularly interesting to extend the present analysis to time-dependent subgroup structures and noninvertible dynamics, etc. Another promising direction is the investigation of operational consequences of irreducible ENM in quantum information processing tasks, such as quantum communication, metrology, and error correction in qudit architectures. Finally, the connection among Weyl dynamical maps, finite symplectic geometry, and generalized resource theories of non-Markovianity may provide deeper structural insights into memory effects in complex quantum systems.


\appendix

\section{The proof of Lemma \ref{lem:HNF} }\label{sec:AppendixA}

\textbf{1. Lattice correspondence:}
Let $\pi: \mathbb{Z}^2 \to \mathbb{Z}_d \times \mathbb{Z}_d$ be a natural homomorphism defined by $\pi(x, y) = (x \bmod d, y \bmod d)$.
Let $G$ be an arbitrary subgroup of $\mathbb{Z}_d \times \mathbb{Z}_d$. We define the pre-image of $G$ in $\mathbb{Z}^2$ as:
\[
\mathbf{L} = \pi^{-1}(G) = \{ (x, y) \in \mathbb{Z}^2 \mid (x \bmod d, y \bmod d) \in G \}.
\]
Since $G$ is a subgroup, $\mathbf{L}$ is a sublattice of $\mathbb{Z}^2$. Furthermore, since the zero element $(0,0)\in G$, the kernel of $\pi$, which is the lattice $d\mathbb{Z} \times d\mathbb{Z}$, must be contained in $\mathbf{L}$.
Thus, we establish a one-to-one correspondence between subgroups $G \subseteq \mathbb{Z}_d \times \mathbb{Z}_d$ and full-rank sublattices $\mathbf{L} \subseteq \mathbb{Z}^2$ such that
\begin{equation*}
(d\mathbb{Z} \times d\mathbb{Z}) \subseteq \mathbf{L} \subseteq \mathbb{Z}^2.
\end{equation*}

\textbf{2. Hermite normal form (HNF):}
Since $\mathbb{Z}$ is a principal ideal domain, any sublattice $\mathbf{L}$ of $\mathbb{Z}^2$ possesses a basis of size 2. Let $\{\mathbf{b}_1, \mathbf{b}_2\}$ be a basis for $\mathbf{L}$. We can form a generator matrix with these basis vectors as rows. By applying elementary integer row operations (unimodular transformations), any integer matrix can be reduced to its \textit{Hermite normal form} \cite{newman1972}. For a $2 \times 2$ matrix, the HNF is unique and strictly upper triangular:
\[
M = \begin{pmatrix} m & w \\ 0 & n \end{pmatrix},
\]
where $m, n, w$ are integers and $0 \leq w < n$. The two row vectors of $M$, $(m, w)$ and $(0, n)$, form a basis for $\mathbf{L}$.

\textbf{3. Deriving constraints from modulo $d$:}
The condition $(d\mathbb{Z} \times d\mathbb{Z}) \subseteq \mathbf{L}$ implies that the vectors $(d, 0)$ and $(0, d)$ must be expressed as integer linear combinations of the basis vectors $(m, w)$ and $(0, n)$.

\begin{enumerate}
    \item [\rm (1)] \textit{Constraint on $n$:} The vector $(0, d)$ must be a multiple of $(0, n)$:
    \[ (0, d) = k \cdot (0, n) \implies d = kn. \]
    Therefore, $n \mid d$.

    \item [\rm (2)] \textit{Constraint on $m$:} The vector $(d, 0)$ must be a linear combination of the basis rows:
    \[ (d, 0) = c_1(m, w) + c_2(0, n) = (c_1 m, c_1 w + c_2 n). \]
    Comparing the first components: $d = c_1 m$.
    Therefore, $m \mid d$.

    \item [\rm (3)] \textit{Consistency constraint:} Comparing the second components from the equation above:
    \[ c_1 w + c_2 n = 0. \]
    Substituting $c_1 = d/m$, we get:
    \[ \frac{d}{m}w + c_2 n = 0 \implies c_2 n = -\frac{wd}{m}. \]
    For $c_2$ to be an integer, $\frac{wd}{m}$ must be divisible by $n$. Ignoring the sign (as $m,w,n$ are non-negative), we require:
    \[ n \ \bigg| \ \frac{wd}{m} \ \textnormal{or equivalently}\  wd \equiv 0 \pmod{mn}. \]
\end{enumerate}

From the above three steps, we conclude that the generators of the subgroup $G$ are simply the images of the HNF basis of $\mathbf{L}$ under the map $\pi$. Since the HNF for the lattice $\mathbf{L}$ is unique, the set of generators $\{(m, w), (0, n)\}$ defining $G$ is unique subject to the derived constraints. \qed

\section{The proof of Lemma \ref{lem:counting_general} }\label{sec:AppendixB}

\textbf{(1) Total count.} According to Lemma \ref{lem:HNF}, a subgroup of order $K$ is uniquely determined by the parameters $(m, w, n)$. For a fixed pair $(m, n) \in \mathcal{S}_K$, the parameter $w$ must satisfy $0 \le w < n$ and the consistency condition $n \mid \frac{wd}{m}$. 
This consistency condition can be rewritten as a linear congruence:
\begin{equation*}
    w \cdot \left(\frac{d}{m}\right) \equiv 0 \pmod n.
\end{equation*}
Let $g = \gcd(n, d/m)$. Dividing the congruence by $g$ yields $w \cdot \frac{d}{mg} \equiv 0 \pmod{\frac{n}{g}}$. Since $\gcd(\frac{d}{mg},\frac{n}{g})=1$, it requires that $w$ is a multiple of $\frac{n}{g}$. Within the restricted range $0 \le w < n$, the variable $w$ can take $g$ distinct values, i.e., $w = 0, \frac{n}{g}, \frac{2n}{g}, \dots, \frac{(g-1)n}{g}$. Therefore, the number of valid choices for $w$ is exactly $\gcd(n, d/m)$. Summing these choices over all permissible pairs $(m, n) \in \mathcal{S}_K$ yields the total count in Eq. \eqref{eq:total_K}.
\color{black}

\textbf{(2) Reduction to prime powers.} Since $\mathbb{Z}_d \times \mathbb{Z}_d$ is a finite abelian group, it can be decomposed into its Sylow $p$-subgroups based on the prime factorization of $d$. By the Chinese Remainder Theorem, we have the isomorphism:
\begin{equation*}
    \mathbb{Z}_d \times \mathbb{Z}_d \cong \prod_{i=1}^{s} \left( \mathbb{Z}_{p_i^{e_i}} \times \mathbb{Z}_{p_i^{e_i}} \right).
\end{equation*}
Any subgroup $G$ of $\mathbb{Z}_d \times \mathbb{Z}_d$ decomposes uniquely into a direct product $G \cong G_1 \times G_2 \times \cdots \times G_s$, where $G_i$ is a subgroup of the $p_i$-component $\mathbb{Z}_{p_i^{e_i}} \times \mathbb{Z}_{p_i^{e_i}}$.

Consequently, the number of subgroups of a given order is a multiplicative function. If $|G| = K = k_1 k_2 \cdots k_s$ (where $k_i$ is a power of $p_i$), then the total count is $\mathcal{N}(K) = \prod_{i=1}^s \mathcal{N}_{p_i}(k_i)$. To maximize the total product $\mathcal{N}(K)$, it suffices to maximize each factor $\mathcal{N}_{p_i}(k_i)$ independently.

\textbf{Proof of maximality for the case of $p$-group.} Consider the group $Q = \mathbb{Z}_{p^e} \times \mathbb{Z}_{p^e}$, which has order $p^{2e}$. We examine the number of subgroups $G_i$ of order $p^r$ for $0 \le r \le 2e$. Let this count be $\mathcal{N}(r)$.
\begin{itemize}
    \item \textbf{Duality (Symmetry):} There exists a duality between subgroups $G_i$ of order $p^r$ and subgroups $Q/G_i$ of order $p^{2e-r}$ via lattice duality. Specifically, the mapping $G_i \rightarrow Q/G_i$ establishes a bijection in the subgroup lattice. Thus, the sequence of counts is symmetric:
    \begin{equation*}
        \mathcal{N}(r) = \mathcal{N}(2e-r).
    \end{equation*}

    \item  \textbf{Unimodality:} The lattice of subgroups of a finite abelian $p$-group of type $(e, e)$ is known to be rank-unimodal. This means that the number of subgroups increases as the order approaches the ``middle'' rank and decreases afterwards. The sequence satisfies:
    \begin{equation*}
        1 = \mathcal{N}(0) < \mathcal{N}(1) < \cdots < \mathcal{N}(e) > \cdots > \mathcal{N}(2e) = 1.
    \end{equation*}
    Therefore, the maximum count for the $p$-component is attained at the middle exponent $r=e$, which corresponds to the subgroup order $|G_i| = p^e$.
\end{itemize}
Since each factor $\mathcal{N}_{p_i}(k_i)$ in the product is maximized when $k_i = p_i^{e_i}$, the total number of subgroups $\mathcal{N}(K)$ is maximized when:
\begin{equation*}
    K = \prod_{i=1}^{s} p_i^{e_i} = d.
\end{equation*}
Thus, the subgroups of order $d$ constitute the largest class of subgroups in $\mathbb{Z}_d \times \mathbb{Z}_d$.

\textbf{Explicit calculation for the maximal case ($K=d$).} 
To explicitly calculate the maximum number of subgroups, we substitute $K=d$ into our general counting formula. The constraint set $\mathcal{S}_d$ dictates $mn = d$. Consequently, for any divisor $m$ of $d$, the parameter $n$ is uniquely determined as $n = d/m$. Hence, one has 
\begin{equation*}
    \gcd\left(n, \frac{d}{m}\right) = \gcd\left(\frac{d}{m}, \frac{d}{m}\right) = \frac{d}{m}.
\end{equation*}
Therefore, the total count simplifies to a single summation over all divisors $m$ of $d$:
\begin{equation*}
    \mathcal{N}(d) = \sum_{m \mid d} \frac{d}{m}.
\end{equation*}
Since the quotient $d/m$ sweeps through the exact same set of divisors as $m$ itself when $m$ runs through all divisors of $d$, we can rewrite this sum as:
\begin{equation*}
    \mathcal{N}(d) = \sum_{k \mid d} k = \sigma_1(d),
\end{equation*}
which is the divisor sum function. Applying the multiplicative property of $\sigma_1$ to the prime factorization $d = \prod_{i=1}^s p_i^{e_i}$, we arrive at the closed-form expression:
\begin{equation*}
    \mathcal{N}(d) = \prod_{i=1}^s \frac{p_i^{e_i+1}-1}{p_i-1}.
\end{equation*} 
\qed

\section{The proof of Proposition \ref{prop:no_semigroup_anisotropic}}\label{sec:Appendix:ani-pro_map}

The proof proceeds by contradiction. We assume that the anisotropic Weyl map forms a semigroup, then we show that this leads to a contradiction regarding the functional form of $p(t)$.

By \eqref{eq:eigenvalue_general}, the eigenvalues of the anisotropic map \eqref{anisotropic proportional map} reduce to
\begin{equation*}
    \lambda_{v}(t) = 1 - p(t) + p(t) \sum_{u \in G \setminus \{0\}} w_{u} \omega^{u \wedge v}.
\end{equation*}
Let $\eta_{v} = \sum_{u \in G \setminus \{0\}} w_{u} \omega^{u \wedge v}$. We can rewrite the eigenvalues as
\begin{equation*}
    \lambda_{v}(t) = 1 - \beta_{v} p(t), \quad \text{where } \beta_{v} = 1 - \eta_{v}.
\end{equation*}
By the hypothesis, there exist at least two indices $a, b \in \mathbb{Z}_d \times \mathbb{Z}_d$ such that $\beta_a \neq \beta_b$ and $\beta_a, \beta_b \neq 0$.

If the anisotropic Weyl map forms a semigroup, then the generator $\mathcal{L}$ must be time-independent. Consequently, the eigenvalues of the generator,
\begin{equation*}
    \mu_{v}(t) = \frac{\dot{\lambda}_{v}(t)}{\lambda_{v}(t)} = \frac{-\beta_{v} \dot{p}(t)}{1 - \beta_{v} p(t)},
\end{equation*}
must be constants for all $v$. These conditions impose specific differential equations on the function $p(t)$. Let $\mu_a(t) \equiv -\Gamma_a$ and $\mu_b(t) \equiv -\Gamma_b$. For the index $a$, we have
\begin{equation*}
    \frac{\beta_a \dot{p}(t)}{1 - \beta_a p(t)} = \Gamma_a \implies p(t) = \frac{1}{\beta_a}(1 - e^{-\Gamma_a t}).
\end{equation*}
Similarly, one can obtain $p(t) = \frac{1}{\beta_b}(1 - e^{-\Gamma_b t})$ for the index $b$.

For the map \eqref{anisotropic proportional map} to be well-defined, the probability distribution function $p(t)$ must be unique. Therefore, the following identity must hold for all $t \ge 0$:
\begin{equation*}
    \frac{1}{\beta_a}(1 - e^{-\Gamma_a t}) = \frac{1}{\beta_b}(1 - e^{-\Gamma_b t}).
\end{equation*}
By comparing the Taylor series expansion of both sides at $t=0$:
\begin{itemize}
    \item First-order term ($t$): $\frac{\Gamma_a}{\beta_a} = \frac{\Gamma_b}{\beta_b}$.
    \item Second-order term ($t^2$): $\frac{\Gamma_a^2}{\beta_a} = \frac{\Gamma_b^2}{\beta_b}$.
\end{itemize}
Dividing the second-order term by the first implies $\Gamma_a = \Gamma_b$. Substituting this back into the first-order term yields $\beta_a = \beta_b$.

However, this contradicts the hypothesis that $\beta_a \neq \beta_b$. Therefore, it is impossible to satisfy the semigroup property for the anisotropic Weyl maps since the decay rates must be time-dependent functions. Consequently, the anisotropic Weyl map does not form a semigroup. \qed

\section{The proof of Proposition \ref{Prop1-weyl-semigroup-condition} }\label{Prop:Marovian semigroup}

By \eqref{eq:eigenvalue_general}, the eigenvalues of the isotropic Weyl map \eqref{three-types-Weyl-maps} are given by
\begin{align*}
\lambda_{v}(t)= 1-p(t) + \frac{p(t)}{|G|-1}\left(\sum_{u \in G} \omega^{u \wedge v}-1\right).
\end{align*}
By the definition of the symplectic dual subgroup $G^\perp$ in \eqref{symplectic_dual_subgroup}, the sum $\sum_{u \in G} \omega^{u \wedge v}$ equals $|G|$ if $v \in G^\perp$ and $0$ if $v \notin G^\perp$. Thus, the eigenvalues reduce to
\begin{equation*}
    \lambda_{v}(t) = 
    \begin{cases} 
        1, & \text{if } v \in G^\perp, \\
        1 - \frac{|G|}{|G|-1}p(t), & \text{if } v \notin G^\perp.
    \end{cases}
\end{equation*}
The multiplicities of the two eigenvalues follow from the relation $|G| \cdot |G^\perp| = d^2$: the former has multiplicity $|G^\perp| = d^2/|G|$, the latter has multiplicity $d^2(1 - 1/|G|)$.

Let $\Lambda(t) = 1 - \frac{|G|}{|G|-1}p(t)$. The eigenvalues of the generator are defined as $\mu_v(t) = \dot{\lambda}_v(t)/\lambda_v(t)$. Therefore, $\mu_{v}(t) $ equals $ 0$ for $v \in G^\perp$, and $\dot{\Lambda}(t)/\Lambda(t)$ for $v \notin G^\perp$.
By \eqref{eq:gamma_def_appendix}, the decay rate $\gamma_{\alpha}(t)$ for any $\alpha \neq 0$ reduces to 
\begin{align*}
    \gamma_{\alpha}(t) &= \frac{1}{d^2}\sum_{v \in \mathbb{Z}_d\times\mathbb{Z}_d} \omega^{-\alpha \wedge v} \mu_{v}(t) = \frac{1}{d^2} \frac{\dot{\Lambda}(t)}{\Lambda(t)} \sum_{v \notin G^\perp} \omega^{-\alpha \wedge v}  \nonumber \\
    &= \frac{1}{d^2} \frac{\dot{\Lambda}(t)}{\Lambda(t)} \left[ \sum_{v \in \mathbb{Z}_d\times\mathbb{Z}_d} \omega^{-\alpha \wedge v} - \sum_{v \in G^\perp} \omega^{-\alpha \wedge v} \right].
\end{align*}
The first sum over the entire discrete phase space is $0$ because $\alpha \neq 0$. For the second sum, by the property of symplectic duality, $\sum_{v \in G^\perp} \omega^{-\alpha \wedge v}$ equals $|G^\perp|$ if $\alpha \in (G^\perp)^\perp = G$, and $0$ if $\alpha \notin G$.
Since $|G^\perp| = d^2/|G|$, we obtain the general formula for the decay rates:
\begin{equation}\label{eq:general_gamma_pt}
    \gamma_{\alpha}(t) = 
    \begin{cases}
        -\frac{1}{|G|} \frac{\dot{\Lambda}(t)}{\Lambda(t)}, & \text{if } \alpha \in G \setminus \{0\}, \\
        0, & \text{if } \alpha \notin G.
    \end{cases}
\end{equation}

\textbf{Necessity:}
Assume the isotropic Weyl map $\mathcal{E}_{\textnormal{iso}}(t)$ is a Markovian semigroup. By definition, its generator must be time-independent and in the Lindblad form, which requires the decay rates to be non-negative constants. 
Let $\gamma_\alpha(t) = \Gamma > 0$ for all $\alpha \in G \setminus \{0\}$. From \eqref{eq:general_gamma_pt}, we have the differential equation:
\begin{equation*}
    -\frac{1}{|G|} \frac{\dot{\Lambda}(t)}{\Lambda(t)} = \Gamma \implies \frac{d}{dt} \ln \Lambda(t) = -|G|\Gamma.
\end{equation*}
Integrating this equation yields $\Lambda(t) = \Lambda(0) e^{-|G|\Gamma t}$. Since the dynamical map must satisfy $\mathcal{E}(0) = \mathbb{I}_d$, we have $p(0) = 0$, leading to $\Lambda(0) = 1$. 
Substituting $\Lambda(t) = 1 - \frac{|G|}{|G|-1}p(t)$ into the solution, one has
\begin{equation*}
    1 - \frac{|G|}{|G|-1}p(t) = e^{-|G|\Gamma t} \implies p(t) = \frac{|G|-1}{|G|} \left( 1 - e^{-|G|\Gamma t} \right).
\end{equation*}
Setting the positive constant $c = |G|\Gamma \in \mathbb{R}^+$, we obtain the exact required form $p(t) = \frac{|G|-1}{|G|}(1-e^{-ct})$.

\textbf{Sufficiency:}
Conversely, assume that the probability distribution function takes the form $p(t) = \frac{|G|-1}{|G|}(1-e^{-ct})$ with $c \in \mathbb{R}^+$. 
Then $\Lambda(t) = 1 - \frac{|G|}{|G|-1}p(t) = e^{-ct}$. Its logarithmic derivative is $\frac{\dot{\Lambda}(t)}{\Lambda(t)} = \frac{-c e^{-ct}}{e^{-ct}} = -c$.
Substituting this into our general decay rate formula \eqref{eq:general_gamma_pt}, one has 
\begin{equation*}
    \gamma_{\alpha}(t) = -\frac{1}{|G|} (-c) = \frac{c}{|G|}.
\end{equation*}
Since $c \in \mathbb{R}^+$ and $|G| \ge 2$, the decay rates $\gamma_\alpha(t)$ are positive constants for $\alpha \in G \setminus \{0\}$. This guarantees that the time-local generator is in the Lindblad form and is time-independent, meaning that the isotropic Weyl dynamical maps form Markovian semigroups. \qed

\color{black}

\section{The proof of Lemma \ref{lem: ENM_Weyl_dephasing} }\label{sec:AppendixC}

Before proving Lemma \ref{lem: ENM_Weyl_dephasing}, we need the following technical lemma:

\begin{Lemma}\label{Generating_function_Identity}
Let $d\geq2$ be any integer and $A,B,z\in\mathbb{C}$. We have the polynomial identity for $z$ where $z^n = 1$:
\begin{equation}\label{eq:identity}
    \frac{1}{A + Bz} = \frac{\sum_{m=0}^{n-1} (-B)^m  A^{n-1-m}  z^m}{A^n - (-B)^n}.
\end{equation}
\end{Lemma}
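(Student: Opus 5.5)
The plan is to verify the identity by cross-multiplication. I will multiply the numerator on the right-hand side by $A+Bz$ and show that, once $z^n=1$ is imposed, the product collapses to $A^n-(-B)^n$. This makes the identity valid whenever $A^n\neq(-B)^n$, which is the implicit nondegeneracy assumption. (In the statement "$d$" should be read as $n$, the order of the root of unity $z$; in the application, $n=\ell$.)

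The key step is a telescoping computation. Set $S=\sum_{m=0}^{n-1}(-B)^m A^{n-1-m}z^m$. Then
\begin{equation*}
(A+Bz)S=\sum_{m=0}^{n-1}(-B)^m A^{n-m}z^m-\sum_{m=0}^{n-1}(-B)^{m+1}A^{n-1-m}z^{m+1}.
\end{equation*}
After shifting the index in the second sum, every term with $1\le m\le n-1$ cancels. What survives is $A^n-(-B)^n z^n$, the standard factorization of $X^n-Y^n$ with $X=A$ and $Y=-Bz$. Substituting $z^n=1$ gives $(A+Bz)S=A^n-(-B)^n$, and dividing by $(A+Bz)\bigl(A^n-(-B)^n\bigr)$ yields the identity.

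The proof also needs $A+Bz\neq0$. This follows automatically: if $A+Bz=0$, then $A^n=(-Bz)^n=(-B)^n$, which contradicts the nondegeneracy assumption. So the only real obstacle is stating the hypothesis $A^n\neq(-B)^n$ explicitly.

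For the later use in Lemma \ref{lem: ENM_Weyl_dephasing}, the intended specialization is $A=1-p(t)$, $B=p(t)$, and $z=\omega^{u\wedge v}$, which is an $\ell$-th root of unity. In that setting the condition reads $(1-p)^\ell\neq(-p)^\ell$. For odd $\ell$ this is $(1-p)^\ell\neq-p^\ell$, which always holds since $p\in(0,1]$. For even $\ell$ it is $(1-p)^\ell\neq p^\ell$, which excludes $p=1/2$. This is consistent with the amplitude restriction $r\le 1/2$: then $p(t)=r(1-e^{-ct})<1/2$ for all $t$, so the condition holds.
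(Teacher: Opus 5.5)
Your proposal is correct and follows the paper's own proof: multiply the numerator by $A+Bz$, telescope to $A^n-(-B)^n z^n$, and use $z^n=1$. Your explicit hypothesis $A^n\neq(-B)^n$ (which also forces $A+Bz\neq 0$) is a worthwhile addition. The paper leaves that hypothesis implicit, and you also correctly read the stray $d$ in the statement as $n$.
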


\begin{proof}
Multiplying the numerator on the right-hand side by $A+Bz$:
\begin{widetext}
\begin{align*}
    (A+Bz) \sum_{m=0}^{n-1} (-1)^m B^m A^{n-1-m} z^m &= \sum_{m=0}^{n-1} (-1)^m B^m A^{n-m} z^m + \sum_{m=0}^{n-1} (-1)^m B^{m+1} A^{n-1-m} z^{m+1} \\
    &=A^n+ \sum_{m=1}^{n-1}  (-1)^m B^m A^{n-m} z^m +\sum_{m=0}^{n-2} (-1)^m B^{m+1} A^{n-1-m} z^{m+1}+(-1)^{n-1} B^n z^n\\
    &= A^n- \sum_{m=0}^{n-2} (-1)^m B^{m+1} A^{n-m-1} z^{m+1}+\sum_{m=0}^{n-2}(-1)^m B^{m+1} A^{n-1-m} z^{m+1} + (-1)^{n-1} B^n z^n\\
    &=A^n+ (-1)^{n-1} B^n z^n=A^n- (-B)^n.
\end{align*}
\end{widetext}
Hence, we complete the proof of the identity (\ref{eq:identity}). 
\end{proof}

Now we are in the position to prove Lemma \ref{lem: ENM_Weyl_dephasing}.

\textit{The proof of Lemma }\ref{lem: ENM_Weyl_dephasing}. 
Let $s =\gcd(u, d)= \gcd(i,j, d)$ be the greatest common divisor of $i, j$ and $d$ (where $1 \le s \le \frac{d}{2}$), then the order of the cyclic subgroup $H$ is $\ell = d/s$. We claim that $\gamma_{\alpha}(t)$ in \eqref{eq:gamma_def_appendix} is non-zero only if $\alpha \in H$. In fact, the symplectic inner product $u \wedge v$ takes values from the set $\{ s x \mid x = 0, 1, \dots, \ell-1 \} \subset \mathbb{Z}_d$. We partition the phase space $\mathbb{Z}_d \times \mathbb{Z}_d$ into $\ell$ disjoint sets:
\begin{equation}
    V_x = \{ v \in \mathbb{Z}_d \times \mathbb{Z}_d \mid u \wedge v \equiv s x \pmod d \}.
\end{equation}
Obviously, the set $V_0$ is the symplectic dual subgroup $H^\perp$. We can rewrite \eqref{eq:gamma_def_appendix} as
\begin{equation}\label{eq:gamma_partitioned_appendix}
    \gamma_{\alpha}(t) = \frac{1}{d^2} \sum_{x=0}^{\ell-1} \mu_x(t) \left( \sum_{v \in V_x} \omega^{-\alpha \wedge v} \right),
\end{equation}
where the eigenvalues of the generator are
\begin{equation*}
    \mu_x(t) = \frac{\dot{p}(t)(\omega^{sx} - 1)}{1 - p(t) + p(t)\omega^{sx}}.
\end{equation*}
Let $f_u: \mathbb{Z}_d\times \mathbb{Z}_d \to \mathbb{Z}_d$ be a group homomorphism defined by $f_u(v) = u \wedge v$. Its kernel is the symplectic dual subgroup $H^\perp .$

Hence, each set $V_x$ is a coset of $H^\perp$ in $ \mathbb{Z}_d\times \mathbb{Z}_d$. Specifically, for any fixed $v_x \in V_x$ (such that $u \wedge v_x = sx$), any element $v \in V_x$ satisfies $u \wedge v = u \wedge v_x$. By linearity, $u \wedge (v - v_x) = 0$, which implies $\eta = v - v_x \in H^\perp$. Thus, every $v \in V_x$ can be expressed as $v = v_x + \eta$. 

Consequently, the sum of phase factors in \eqref{eq:gamma_partitioned_appendix} can be factored as
\begin{equation*}
    \sum_{v \in V_x} \omega^{-\alpha \wedge v} = \sum_{\eta \in H^\perp} \omega^{-\alpha \wedge (v_x + \eta)} = \omega^{-\alpha \wedge v_x} \sum_{\eta \in H^\perp} \omega^{-\alpha \wedge \eta}.
\end{equation*}
By virtue of the symplectic orthogonality, the sum $\sum_{\eta \in H^\perp} \omega^{-\alpha \wedge \eta}$ equals $|H^\perp|=\frac{d^2}{|H|}=sd$ if $\alpha \in (H^\perp)^\perp=H$ and $0$ otherwise. Thus, $\gamma_{\alpha}(t)$ is non-zero only if $\alpha \in H$.

For $\alpha \in H$, there exists an integer $y \in \{0, 1, \dots, \ell-1\}$ such that $\alpha = y u$. For any $v \in V_x$, the phase factor is
\begin{equation*}
    -\alpha \wedge v = -(y u) \wedge v = -y (u \wedge v) \equiv -ysx \pmod d.
\end{equation*}
Defining $\omega_\ell = e^{2\pi \mathrm{i}/ \ell}$ as the $\ell$-th root of unity, we have $\omega^{-ysx} = \omega_\ell^{-yx}$. Substituting this into Eq. \eqref{eq:gamma_partitioned_appendix} yields
\begin{align}\label{gamma_y}
    \gamma_y(t) &= \frac{1}{d^2} \sum_{x=0}^{\ell-1} \mu_x(t) \cdot \left( \omega_\ell^{-yx} \cdot |H^\perp| \right) \nonumber \\
    &= \frac{s}{d} \sum_{x=0}^{\ell-1} \omega_\ell^{-yx} \mu_x(t),
\end{align}
where 
$$\mu_x(t) = \frac{\dot{p}(t)(\omega_\ell^x - 1)}{1 - p(t) + p(t)\omega_\ell^x}.$$ 
Using Lemma \ref{Generating_function_Identity} with $A = 1 - p(t)$, $B = p(t)$ and $z=\omega_\ell^{x}$, we expand $\mu_x(t)$ as
\begin{align}\label{mu_GF}
    \mu_x(t) &= \frac{\dot{p}(t)}{D} \left( \sum_{m=0}^{\ell-1} C_m \omega_\ell^{x(m+1)} - \sum_{m=0}^{\ell-1} C_m \omega_\ell^{xm} \right),
\end{align}
where $D = A^\ell - (-B)^\ell$ and $C_m = (-B)^m A^{\ell-1-m}$. Substituting (\ref{mu_GF}) into (\ref{gamma_y}) and employing the orthogonality relation $\frac{s}{d} \sum_{x=0}^{\ell-1} \omega_\ell^{x(a-b)} = \delta_{ab}$, we obtain
\begin{align}\label{eq:general_gamma}
    \gamma_y(t) &= \frac{\dot{p}(t)}{D} \left( C_{y-1} - C_y \right) \nonumber\\ 
    &= \frac{\dot{p}(t) (-B)^{y-1}A^{\ell-1-y}\left( A+B \right)}{A^\ell - (-B)^\ell}\nonumber\\
    &= \frac{\dot{p}(t) (-B)^{y-1}A^{\ell-1-y}}{A^\ell - (-B)^\ell},
\end{align}
where the last equality holds as $A+B=1$.

\textbf{Case analysis by the parity of $\ell = d/s$:}

\begin{enumerate}
\item \textbf{$\ell$ is odd.} Here $(-B)^\ell = -B^\ell$, so $D = A^\ell + B^\ell > 0$. Choosing the index $y = \ell - 1$, Eq. (\ref{eq:general_gamma}) becomes
\begin{align*}            
\gamma_{\ell-1}(t)= \frac{-\dot{p}(t) B^{\ell-2}}{A^\ell + B^\ell}.
\end{align*}

Since $\dot{p}(t) \ge 0$, we have $\gamma_{\ell-1}(t)=\frac{-\dot{p}(t) p(t)^{\ell-2}}{(1-p(t))^\ell + p(t)^\ell} \le 0$ for all $t \ge 0$.
  
Thus, for odd $\ell$ and $t>0$, each Weyl dephasing map always possesses a negative decay rate $\gamma_{\ell-1}(t)$, making it eternally non-Markovian. 
    
\item \textbf{$\ell$ is even.} Here $(-B)^\ell = B^\ell$, so $D = A^\ell - B^\ell = (1 - 2p(t)) \sum_{k=0}^{\ell-1} A^{\ell-1-k} B^k$. 
Assuming $p(t) = r(1 - e^{-ct})$, the sign of $D$ depends on the term $1 - 2p(t)$.
\begin{enumerate}
\item [\rm (i)] \textbf {$0 < r \le 1/2$:} In this regime, $1 - 2p(t) \ge 0$ for all $t$, thus $D > 0$. Similar to the odd case, $\gamma_{\ell-1}(t) < 0$ for all $t > 0$.
\item [\rm (ii)] \textbf{$1/2 < r \le 1$:} $1-2p(t)$ changes sign at $t^* = \frac{1}{c}\ln\frac{2r}{2r-1}$. For $t > t^*$, $D < 0$, causing the sign of the decay rates to flip. This violates the condition of eternal non-Markovianity.
  \end{enumerate}
Consequently, for even $\ell$, the Weyl dephasing map is eternally non-Markovian if $0 < r \le 1/2$.
\end{enumerate}
\qed

\section{The proof of Theorem \ref{Thm:General_NM_Overlap}}\label{sec:AppendixE}

\textbf{(1) Spectral decomposition with overlaps.} 
As defined in \eqref{symplectic_dual_subgroup}, let $G_k^\perp$ be the symplectic dual of the $k$-th subgroup. Since the symplectic dual subgroups $G_k^\perp$ may overlap, a specific index $u$ may belong to multiple dual subgroups simultaneously. The eigenvalues of the $k$-th map $\mathcal{E}_{\textnormal{iso}}^{(k)}$ are $\lambda^{(k)}_u = 1$ if $u \in G_k^\perp$ and $e^{-ct}$ otherwise. Then the eigenvalues of the convex combination $\tilde{\mathcal{E}}(t)$ at the index $u$ are
\begin{equation}\label{eig:mixtures}
    \tilde{\lambda}_u(t) = \sum_{k=1}^N x_k \lambda^{(k)}_u(t) = \sum_{k: u \in G_k^\perp} x_k + \sum_{k: u \notin G_k^\perp} x_k e^{-ct}.
\end{equation}
Define
\begin{equation*}
    X_u = \sum_{k=1}^{N} x_k \mathbb{I}_{G_k^\perp}(u),
\end{equation*}
as the sum of the mixing coefficients of Weyl semigroups whose corresponding symplectic dual subgroups contain the index $u$. Note that $X_0 = 1$ (since $0 \in G_k^\perp$ for all $k$) and $X_u = 0$ if $u\notin G_k^\perp$. Here, the indicator function $\mathbb{I}_{G_k^\perp}(u)$ is defined as
\begin{equation*}
\mathbb{I}_{G_k^\perp}(u) =
\begin{cases}
1, & \text{if } u \in G_k^\perp, \\
0 ,& \text{if } u \notin G_k^\perp.
\end{cases}
\end{equation*}
Then we can rewrite the eigenvalue \eqref{eig:mixtures} as
\begin{equation*}
    \tilde{\lambda}_u(t) = X_u + (1-X_u)e^{-ct}.
\end{equation*}

\textbf{(2) The eigenvalues of generator.} The eigenvalues of the generator are 
\begin{align*}
\mu_u(t) &= \frac{\dot{\tilde{\lambda}}_u(t)}{\tilde{\lambda}_u(t)} \nonumber\\
&=\frac{-c(1-X_u)e^{-ct}}{X_u + (1-X_u)e^{-ct}} \nonumber\\
&=-c+D(u,t),
\end{align*}
where 
\begin{equation}\label{D_sign}
D(u,t)=
\begin{cases}
\frac{cX_u}{X_u + (1-X_u)e^{-ct}}>0, & \text{if } u \in G_k^\perp, \\
0 ,& \text{if } u \notin G_k^\perp.
\end{cases}
\end{equation}

\textbf{(3) The expression of decay rate.} Note that $\mu_0(t)=0$ for $u=0\in G_k^\perp$.
The decay rate $\gamma_\alpha(t)$ for $\alpha \neq 0$ is the inverse discrete Fourier transform of $\mu_u(t)$:
\begin{align}\label{eq:gamma_general_overlap}
    \gamma_\alpha(t) &= \frac{1}{d^2} \sum_{u \neq 0} \omega^{-\alpha \wedge u} \mu_u(t) \nonumber \\
    &=\frac{1}{d^2} \sum_{u \neq 0} \omega^{-\alpha \wedge u} [-c+D(u,t)]\nonumber \\
    &=\frac{1}{d^2}  \left[c+\sum_{u \neq 0} \omega^{-\alpha \wedge u}D(u,t)\right],
\end{align}
where the last equality holds for $\sum_{u \neq 0} \omega^{-\alpha \wedge u}=-1$.

\textbf{(4) Sign analysis of decay rate.} 
The sign of the decay rate depends on whether the index $\alpha$ falls within the union of the subgroups $G_k$.

\textit{ Case 1: $\alpha$ is uncovered in the union of $G_k$, i.e., $\alpha \notin \bigcup_{k=1}^N (G_k\setminus\{0\})$.}

Denote $D(u,t) = f(X_u)$, where the function $f(x)$ is defined as:
\begin{equation*}
    f(x) = x g(x), 
\end{equation*}
where $g(x)=\frac{c}{x + (1-x)e^{-ct}}$ is a monotonically decreasing function for any $x \in [0, 1]$. 

We claim that $D(u,t)\le \sum_{k=1}^N f(x_k)\mathbb{I}_{G_k^\perp}(u)$. Indeed, note that
\begin{align}\label{subadditive_property}
f\left(\sum_i x_i\right)=&\sum_i x_i\ g\left(\sum_i x_i\right) \nonumber\\
\leq&\sum_i x_ig\left( x_i\right)=\sum_i f\left(x_i\right)
\end{align}
for any $x_i\in[0,1].$ Therefore, we have
\begin{align*}
    D(u,t) = f(X_u) &= f\left(\sum_{k=1}^N x_k \mathbb{I}_{G_k^\perp}(u)\right) \nonumber \\
    &\le \sum_{k=1}^N f(x_k \mathbb{I}_{G_k^\perp}(u))=\sum_{k=1}^N f(x_k)\mathbb{I}_{G_k^\perp}(u).
\end{align*}
Note that the equality holds if and only if the index $u$ belongs to at most one symplectic dual subgroup $G_k^\perp$. In other words, the strict inequality occurs only when $u$ lies in the intersection of two or more distinct dual subgroups. 

To quantify this strict inequality caused by the overlapping, we define the non-negative overlap gap as
\begin{equation}\label{M(u,t)}
M(u,t) = \sum_{k=1}^N f(x_k)\mathbb{I}_{G_k^\perp}(u) - D(u,t),  
\end{equation}
which is strictly positive only $u$ belongs to the intersection set 
\begin{equation}\label{Inter_set}
S_{\text{int}} = \bigcup_{i \neq j} (G_i^\perp \cap G^\perp_j).
\end{equation}
Substituting \eqref{M(u,t)} into \eqref{eq:gamma_general_overlap}, one has
\begin{widetext}
\begin{align}\label{eq:gamma_rigorous}
    \gamma_\alpha(t) &= \frac{1}{d^2} \left[ c + \sum_{u \neq 0} \omega^{-\alpha \wedge u} \left( \sum_{k=1}^N f(x_k) \mathbb{I}_{G_k^\perp}(u) - M(u,t) \right) \right] \nonumber \\
    &= \frac{1}{d^2} \left[ c + \sum_{k=1}^N f(x_k) \left( \sum_{u \in G_k^\perp \setminus \{0\}} \omega^{-\alpha \wedge u} \right) - \sum_{u \neq 0} \omega^{-\alpha \wedge u} M(u,t) \right] \nonumber \\
    &= \frac{1}{d^2} \left[ \left( c - \sum_{k=1}^N f(x_k) \right) -\sum_{u \in S_{\text{int}}} \omega^{-\alpha \wedge u} M(u,t)\right].
\end{align}
\end{widetext}
Here, the last equality holds because $\sum_{u \in G_k^\perp \setminus \{0\}} \omega^{-\alpha \wedge u}=-1$ and $M(u,t)=0$ for any $u \notin S_{\text{int}}.$

Obviously, the first term $c - \sum_{k=1}^N f(x_k)$ in \eqref{eq:gamma_rigorous} is negative for $t > 0$. The second term depends on the phase $\omega^{-\alpha \wedge u}$. We can divide the following two cases:
\begin{itemize}
    \item [\rm (i)] If there are no overlaps, i.e., $S_{\text{int}}= \varnothing$, then the second term is equal to $0$, so $\gamma_\alpha(t) < 0$.
    \item [\rm (ii)] If there exist overlaps, we can choose an $\alpha\notin\bigcup_{k=1}^N (G_k\setminus\{0\})$ that satisfies $\omega^{-\alpha \wedge u} = 1$ for $u \in S_{\text{int}}$. Such an $\alpha$ exists provided that the union of subgroups $G_k$ do not span the entire discrete phase space. For this choice, the second term $-\sum_{u\in S_{\text{int}}} M(u,t) < 0$.
\end{itemize}
Thus, both terms in \eqref{eq:gamma_rigorous} are negative, ensuring $\gamma_\alpha(t)=0$ at $t=0$ and  $\gamma_\alpha(t) < 0$ for any $t > 0$. 

\textit{Case 2: $\alpha$ is covered, i.e., $\alpha \in \bigcup_{k=1}^N (G_k \setminus \{0\})$.}

Suppose $\alpha$ belongs to a specific subgroup $G_k$. Combined (\ref{eq:gamma_general_overlap}) with the definition of \eqref{M(u,t)}, the decay rate reduces to
\begin{align}\label{decay_case2}
    \gamma_\alpha(t) = \frac{1}{d^2} \bigg[ c &+ \sum_{j=1}^N f(x_j) \bigg( \sum_{u \in G_j^\perp \setminus \{0\}} \omega^{-\alpha \wedge u} \bigg) \nonumber\\
    &- \sum_{u \in S_{\textnormal{int}}} \omega^{-\alpha \wedge u} M(u,t) \bigg].
\end{align}
We evaluate the first summation term based on symplectic duality:
\begin{enumerate}
    \item [\rm (i)] For the subgroup $G_k$ containing $\alpha$ (i.e., $\alpha \in G_k$), we have $-\alpha \wedge  u\equiv 0 \pmod d$ for all $u \in G_k^\perp$. Thus,
    \begin{equation}\label{G_k-single}
        \sum_{u \in G_k^\perp \setminus \{0\}} \omega^{-\alpha \wedge u} = |G_k^\perp| - 1,
    \end{equation}
    where $|G_k^\perp|=\frac{d^2}{|G_k|}.$
    \item [\rm (ii)]  For other subgroups $G_j$ ($j \neq k$) where $\alpha \notin G_j$, one has
    \begin{equation}\label{G_j-term}
        \sum_{u \in G_j^\perp \setminus \{0\}} \omega^{-\alpha \wedge u} = 0 - 1 = -1.
    \end{equation}
\end{enumerate}
Substituting \eqref{G_k-single} and \eqref{G_j-term} into \eqref{decay_case2}, one has
\begin{widetext}
\begin{align}\label{simplify_decay_rate}
    \gamma_\alpha(t) &= \frac{1}{d^2} \left[ c + f(x_k)(|G_k^\perp| - 1) + \sum_{j \neq k} f(x_j)(-1) - \sum_{u \in S_{\textnormal{int}}} \omega^{-\alpha \wedge u} M(u,t)  \right] \nonumber \\
    &= \frac{1}{d^2} \left[  c - \sum_{j\neq k}^N f(x_j) + (|G_k^\perp|-1) f(x_k) - \sum_{u \in S_{\textnormal{int}}} \omega^{-\alpha \wedge u} M(u,t)  \right].
\end{align}
\end{widetext}
The sign of (\ref{simplify_decay_rate}) is related to the choices of the mixing coefficients $x_k,$ which may be positive or negative.

\textbf{(5) Counting constraint and existence of uncovered indices.}
The eternal non-Markovianity of the convex combination depends on the existence of at least one uncovered index $\alpha \notin \bigcup_{k=1}^N (G_k \setminus \{0\})$, as established in case 1. To guarantee this, the union of the mixed subgroups must constitute a proper subset of the discrete phase space $\mathbb{Z}_d \times \mathbb{Z}_d$.

Let $S$ denote the set of all nontrivial indices covered by the union of these $N$ subgroups:
\begin{equation*}
    S = \bigcup_{k=1}^N \big( G_k \setminus \{0\} \big).
\end{equation*}
According to the principle of subadditivity of cardinality, the size of a union is always less than or equal to the sum of the individual sizes. Therefore, we can bound the cardinality of $S$ by
\begin{align}\label{Eq:subadditivity_bound}
|S| &= \left| \bigcup_{k=1}^N \big( G_k \setminus \{0\} \big) \right| \leq \sum_{k=1}^N \big| G_k \setminus \{0\} \big| \nonumber\\ 
&= \sum_{k=1}^N (K - 1) = N(K - 1).
\end{align}
Note that the equality $|S| = N(K-1)$ holds if and only if the subgroups are \textit{pairwise disjoint} (intersecting solely at the identity). If there are non-trivial overlaps between the subgroups, the actual coverage is strictly reduced (i.e., $|S| < N(K-1)$), which makes the existence of an uncovered $\alpha$ even more likely.

We now impose the upper bound from Eq.~(\ref{eq:N_bound_general}). If the number of mixtures satisfies $N < \frac{d^2 - 1}{K - 1}$, multiplying both sides by $K-1$ (since $K \ge 2$) yields the strict inequality:
\begin{equation}\label{Eq:N_strict_bound}
    N(K - 1) < d^2 - 1.
\end{equation}
By combining the subadditivity bound \eqref{Eq:subadditivity_bound} and the inequality \eqref{Eq:N_strict_bound}, we establish the following inequalities
\begin{equation*}
    |S| \leq N(K - 1) < d^2 - 1.
\end{equation*}

Since the total number of nontrivial elements in the discrete phase space $\mathbb{Z}_d \times \mathbb{Z}_d$ is exactly $d^2 - 1$, the inequality $|S| < d^2 - 1$ guarantees that there exists at least one nontrivial element $\alpha \in \mathbb{Z}_d \times \mathbb{Z}_d \setminus \{0\}$ such that $\alpha \notin S$. 

Since $\alpha \notin S$, it directly follows that $\alpha \notin \bigcup_{k=1}^N (G_k \setminus \{0\})$. For this specific uncovered index $\alpha$, the decay rate is governed by the analysis in case 1, which implies that $\gamma_\alpha(t) = 0$ at $t = 0$ and $\gamma_\alpha(t) < 0$ for all $t > 0$.

Therefore, combining this with the total number of available subgroups $\mathcal{N}(K)$ from Lemma \ref{lem:counting_general}, as long as the valid range of mixtures satisfies
\begin{equation*}
    2 \le N < \min\left\{ \frac{d^2-1}{K-1},\  \mathcal{N}(K) \right\},
\end{equation*}
the convex combination \eqref{cc-n-Weyl-semigroups} is eternally non-Markovian. \qed

\bibliographystyle{apsrev}
\bibliography{Weyl.bib}
\end{document}